\DeclareSymbolFontAlphabet{\amsmathbb}{AMSb}%
\definecolor{cblue}{rgb}{0.16, 0.32, 0.75}
\definecolor{cred}{rgb}{0.7, 0.11, 0.11}
\def\<{\langle}
\def\>{\rangle}
\def\oper{{\mathchoice{\rm 1\mskip-4mu l}{\rm 1\mskip-4mu l}
		{\rm 1\mskip-4.5mu l}{\rm 1\mskip-5mu l}}}
\newcommand{\Dt}{\Delta t}
\newcommand{\tr}{\mathrm{Tr}}
\newcommand{\A}{\mathop{\mathcal{A}}\nolimits}
\newcommand{\ketbra}[2]{| #1 \rangle\!\langle #2 | }
\newtheorem{theorem}{Theorem}[section]
\newtheorem{proposition}{Proposition}[section]
\newtheorem{corollary}{Corollary}[section]
\newtheorem{lemma}{Lemma}[section]
\theoremstyle{remark}\newtheorem{remark}{Remark}[section]
\newcommand{\hilb}{\mathcal{H}}
\newcommand{\vac}{\mathrm{vac}}
\renewcommand{\i}{\mathrm{i}}
\newcommand{\e}{\mathrm{e}}
\newcommand{\g}{\mathrm{g}}
\renewcommand{\A}{\mathrm{A}}
\renewcommand{\L}{\mathrm{L}}
\renewcommand{\H}{\mathrm{H}}
\begin{document}	
	\title{
		\textbf{On Markovianity and classicality in multilevel spin--boson models}
	}
	
	\author[$\hspace{0cm}$]{Dariusz Chru\'sci\'nski$^{1,}$\footnote{darch@fizyka.umk.pl}}
	\author[$\hspace{0cm}$]{Samaneh Hesabi$^{1,}$\footnote{samaneh.hesabi@umk.pl}}
	\affil[$1$]{\small Institute of Physics, Faculty of Physics, Astronomy and Informatics, Nicolaus Copernicus University, Grudziadzka 5/7, 87-100 Toru\'n, Poland}
	
	\author[$\hspace{0cm}$]{Davide Lonigro$^{2,3,}$\footnote{davide.lonigro@ba.infn.it}}
	\affil[$2$]{\small Dipartimento di Fisica and MECENAS, Universit\`{a} di Bari, I-70126 Bari, Italy}
	\affil[$3$]{\small INFN, Sezione di Bari, I-70126 Bari, Italy}
	
	\maketitle
	\vspace{-0.5cm}	
	
	\begin{abstract}
		We provide a detailed discussion about the unitary and reduced evolution induced by family of Hamiltonian models describing a multilevel system, with a ground state and a possibly multilevel excited sector, coupled to a multimode boson field via a rotating-wave interaction. We prove explicitly that the system, in the limit in which the coupling is flat with respect to the boson frequencies, is Markovian under sharp measurements in arbitrary bases; we also find necessary and sufficient conditions under which the process is classical, i.e. its family of multitime joint probability distributions satisfies the Kolmogorov consistency condition, and may thus be equivalently obtained by a classical stochastic process.
	\end{abstract}
	
	\maketitle
	
	\section{Introduction}
	
	Long before the dawn of quantum mechanics, randomness has always occupied a central role in the description of natural phenomena. Since the publication of A. Kolmogorov's pioneering book~\cite{Kolmogorov}, the modern theory of stochastic processes has been developed in a measure-theoretical axiomatic framework which proves to be powerful and flexible enough for the description of diverse classical physical phenomena---Brownian motion being a paradigmatic example---as well as having applications in other natural and social sciences~\cite{Kampen}.
	
	In this framework, a (classical) stochastic process corresponds to a collection of functions $\{X(t)\}_{t\in T}$, with $T\subset\mathbb{R}$ being the set of times, from a suitably chosen probability space to some set of values $\mathcal{X}$. Most importantly, it is associated with a family of \textit{joint probability distributions} $\{\mathbb{P}_n\}_{n\in\mathbb{N}}$ defined as follows: given a family of values $x_1,\dots,x_n\in\mathcal{X}$ and a set of times $t_1,\dots,t_n$, the quantity
	\begin{equation}
		\mathbb{P}_n(x_n,t_n;x_{n-1},t_{n-1};\ldots;x_1,t_1)=\mathrm{Prob}\left\{X(t_1)=x_1,\dots,X(t_n)=x_n\right\}
	\end{equation}
	represents the probability that the system, probed at each time $t_j$, is measured in the state $x_j$. Interestingly, the distributions associated with a stochastic process automatically satisfy a \textit{consistency condition} which, for a discrete set of values, reads as follows:
	\begin{equation}\label{eq:consistency}
		\mathbb{P}_{n-1}\left(x_n,t_n;\dots;\cancel{x_j,t_j};\dots;x_1,t_1\right)=\sum_{x_j\in\mathcal{X}}\mathbb{P}_n(x_n,t_n;\dots;x_j,t_j;\dots;x_1,t_1),
	\end{equation}
	that is, for all $k<n$ all $k$-time joint distributions can be obtained via marginalization from the $n$-time one. Conversely, the celebrated Kolmogorov extension theorem guarantees that every \textit{consistent} (i.e. satisfying Eq.~\eqref{eq:consistency}) family of joint probability distributions can be obtained from a classical stochastic process. Importantly, as pointed out in~\cite{Andrea-1,Andrea-2}, Eq.~\eqref{eq:consistency} essentially means that not performing a measurement at the time $t_j$ is operationally indistinguishable from performing the measurement at the time $t_j$ and then ``forgetting'' about the outcome, i.e. averaging over all possible outcomes---that is, \textit{measurements do not alter the state of the system}.
	
	Markovian processes are particularly important. A stochastic process is said to be \textit{Markovian}\footnote{
		While this definition is apparently asymmetric in time, equivalent definitions of Markovianity which restore the symmetry between “future" and “past" can be given; see e.g.~\cite[Definition 7.1]{cufaro}. Essentially, Markovianity means that the past and the future are conditionally mutually independent with respect to the present time.
	} whenever the following property holds for all $x_1,\dots,x_n\in\mathcal{X}$ and $t_1\leq \ldots\leq t_n\in T$:
	\begin{equation}\label{eq:markov}
		\mathbb{P}(x_n,t_n|x_{n-1},t_{n-1};\dots;x_1,t_1)=\mathbb{P}(x_n,t_n|x_{n-1},t_{n-1}),
	\end{equation}
	with $\mathbb{P}(\cdot|\cdot)$ being the conditional probability; in words, the process is Markovian if the information contained in the more recent measure is fully representative of the past history of the system. In this sense, a Markovian process is memoryless. Most importantly, as a straightforward consequence of Eq.~\eqref{eq:markov}, the whole family of joint probabilities can be entirely reconstructed from the single-time probability $\mathbb{P}_1(x,t)$ and the transition probability $\mathbb{P}(x,t|y,s)$ via
	\begin{equation}\label{eq:reconstruct}
		\mathbb{P}_n(x_n,t_n;x_{n-1},t_{n-1};\ldots;x_1,t_1)=\mathbb{P}(x_n,t_n|x_{n-1},t_{n-1})\cdots\mathbb{P}(x_2,t_2|x_{1},t_{1})\mathbb{P}_1(x_1,t_1);
	\end{equation}
	besides, by Eqs.~\eqref{eq:markov}--\eqref{eq:reconstruct}, the whole family of consistency conditions~\eqref{eq:consistency} reduces to the Chapman--Kolmogorov equations~\cite{Kampen}:
	\begin{eqnarray}\label{eq:chapman0}
		\mathbb{P}_1(x,t)&=&\sum_{z\in\mathcal{X}}\mathbb{P}(x,t|z,r)\mathbb{P}_1(z,r),\qquad\quad\,\, t\geq r\\
		\label{eq:chapman}
		\mathbb{P}(x,t|y,s)&=&\sum_{z\in\mathcal{X}}\mathbb{P}(x,t|z,r)\mathbb{P}(z,r|y,s),\qquad t\geq r\geq s.
	\end{eqnarray}
	
	Quantum mechanics is, in its essence, a probabilistic theory: the outcome of any quantum mechanical experiment is famously non-deterministic. However, this is far from being the only novelty of quantum mechanics: the plethora of new phenomena unveiled by a close scrutiny to the quantum world, like entanglement and decoherence, makes the language of classical stochastic processes unsuited to quantum mechanics. To put it in mathematical terms, let us consider an open quantum system described by a Hamiltonian $\textbf{H}$ on a Hilbert space $\hilb=\hilb_{\rm S}\otimes\hilb_{\rm B}$, with $\hilb_{\rm S}$ being the space associated with the experimentally accessible system and $\hilb_{\rm B}$ with the external environment (or bath). Suppose that said system is initially prepared at a time $t_0$ in a state associated with some density operator $\varrho_0\in\mathcal{B}(\hilb)$, and repeatedly probed at times $t_n\geq t_{n-1}\geq\dots t_1\geq t_0$ via some family of sharp projective measurements in an orthonormal basis $\{\ket{x}\}_{x\in\mathcal{X}}$ of $\hilb_{\rm S}$. The laws of quantum mechanics provide us a simple and elegant rule to construct the joint probability distribution associated with any such sequence of measurements:
	\begin{equation}\label{eq:probquantum}
		\mathbb{P}_n(x_n,t_n;\dots;x_1,t_1)=\tr\left[\left(\mathcal{P}_{x_n}\otimes\mathcal{I}_{\rm B}\right)\mathcal{U}_{t_n-t_{n-1}}\cdots\left(\mathcal{P}_{x_1}\otimes\mathcal{I}_{\rm B}\right)\mathcal{U}_{t_1-t_0}(\varrho_{0})\right],
	\end{equation}
	with $\mathcal{U}_t=\e^{-\i t\textbf{H}}(\cdot)\e^{\i t\textbf{H}}$ representing the free evolution of the system+environment induced by $\textbf{H}$ on $\hilb_{\rm S}\otimes\hilb_{\rm B}$,  $\mathcal{I}_{\rm B}$ the identity map on the environment space $\hilb_{\rm B}$,
	and $\mathcal{P}_x = P_x (\cdot)P_x$, with $P_x = |x\>\!\<x|$. Eq.~\eqref{eq:probquantum} defines a legitimate family of joint probability distributions; however, in general they \textit{do not satisfy} the consistency condition~\eqref{eq:consistency}, and thus are not associated with any underlying classical stochastic process. This is ultimately related to the fact that, in general, measurement \textit{do} disturb the state of a quantum system---not performing a measurement is not the same as performing said measurement and forgetting it, in contrast with Eq.~\eqref{eq:consistency}.
	
	Inspired by the theory of stochastic processes, and following the same point of view as in~\cite{Andrea-1,Andrea-2}, we shall adopt hereafter the following definitions. Given an open quantum system associated with a total Hamiltonian $\textbf{H}$ on $\hilb=\hilb_{\rm S}\otimes\hilb_{\rm B}$, initially prepared in some collective state $\varrho_0$, and a fixed orthonormal basis of $\hilb_{\rm B}$, we shall denote as \textit{quantum process} any sequence of sharp measurements in said basis on the system; defining the corresponding family of joint probability distributions as in Eq.~\eqref{eq:probquantum}, the process is said to be
	\begin{itemize}
		\item \textit{Markovian}, if the Markov property~\eqref{eq:markov} holds.
		\item \textit{classical}, if the consistency condition~\eqref{eq:consistency} (or, in the Markov case, Eq.~\eqref{eq:chapman}) holds.
	\end{itemize}
	Importantly, both definitions are strictly \textit{dependent} on the choice of the measurement basis.
	
	In this regard, it is worth recalling that no universally accepted definition of quantum Markovianity exists. Diverse mathematical properties, each focusing on particular aspects of the problem under investigation, have been---more or less formally---put forward as definitions of quantum Markovianity. Some of them solely involve the properties of the reduced dynamics: for instance, the completely positive divisibility~\cite{RHP} (and, as a particular case, the semigroup property) and the monotonicity of the distinguishability between arbitrary states~\cite{BLP}. Other approaches take into account the full unitary dynamics of the system \textit{and} the environment in order to account for the unraveling of information backflow via external interventions or measurements on the system: among many other, the validity of the factorization (or Born) approximation---roughly speaking, the idea that the environment, not ``feeling'' the action of the system, evolves independently of the presence of the latter---or the quantum regression formula. An exhaustive discussion of all concepts of Markovianity in quantum mechanics and the hierarchical relations between them is reported in~\cite{NM4}. One should always take into account this ambiguity when dealing with concepts of (non--)Markovianity in open quantum systems (cf.\ the recent reviews~\cite{NM1,NM2,NM3,PR}). The approach of~\cite{Andrea-1,Andrea-2} adopted in this paper is closely related to the mathematical formulation of quantum Markov stochastic processes proposed in~\cite{QP1,QP2,QP3} (cf.\ also the recent review~\cite{Modi-PRX}). Moreover, it is closely related to the recent approach to quantum Markovianity proposed in~\cite{kavan1,kavan2,kavan3} in which the Markovianity of the corresponding process is characterized in terms of the so-called quantum process tensor of the system. The factorization of the process tensor is essentially equivalent to the validity of quantum regression~\cite{NM4}. For a discussion of Markovianity based on the quantum regression formula see also~\cite{Francesco,Bassano}.
	
	Adopting the above definition of Markovianity~\cite{Andrea-1,Andrea-2}, a useful characterization of classicality was first provided in~\cite{Andrea-1} for dynamical semigroups, and then extended in~\cite{Andrea-2} to general quantum Markov processes. Classicality was shown to be crucially interrelated with the generation of coherence: fundamentally, a Markovian multitime statistics fails to be classical if and only if the dynamics generates coherences and subsequently turns them into populations---mathematically, classicality holds whenever the process can be represented by means of non-coherence-generating-and-detecting (NCGD) maps. Furthermore, in~\cite{Andrea-2} a similar characterization was obtained for (possibly) non-Markovian processes: while in this case the absence of coherence does not guarantee classicality, they managed to provide a direct connection between classicality and the vanishing of quantum discord between the evolving system and its environment.
	
	Inspired by such general results, the scope of the present paper is to provide an explicit discussion of Markovianity and classicality in the context of a specific, paradigmatic class of open quantum systems: spin boson models and their generalizations. The interest in such models is not recent. Since decades, spin--boson models and their generalizations have been claiming a primary role in the theory of open quantum systems~\cite{Breuer,Legget,Weiss,RIVAS,Ingold}, finding applications as toy models for the description of several quantum phenomena like noise and decoherence~\cite{Zoller,Plenio,Car,Zurek,DEC1,DEC2} as well as practical applications~\cite{Zoller,Plenio,Car,Breuer,RIVAS,SIM,phase}. The Markovianity of the evolution induced by spin--boson models, in many of the possible declensions listed above, has also been analyzed by several authors~\cite{Breuer,RIVAS,NM1,NM2,NM3}; spin--boson models also provide simple examples in which, via a suitable choice of the form factor, the onset of non-Markovianity can be delayed at arbitrarily large times~\cite{hidden1,hidden2}.
	
	In this regard, we shall consider a $(d+1)$-level quantum system living in the Hilbert space $\hilb_{\rm S}=\hilb_{\mathrm{e}}\oplus\hilb_{\mathrm{g}}$, with $\dim\hilb_{\mathrm{e}}=d$ and $\dim\hilb_{\mathrm{g}}=1$. $\hilb_{\mathrm{e}}$ corresponds to a $d$-dimensional excited sector, whereas $\hilb_{\mathrm{g}}$ is spanned by the ground state $|\mathrm{0}\>$. The system is coupled to a multimode boson bath, which (for simplicity) will be taken as a $d$-mode bath, with the total system-bath Hamiltonian given by ($\hbar=1$)
	\begin{equation}\label{Hn}
		\mathbf{H} = H_{\mathrm{e}} \otimes \oper_{\rm B} + \oper_{\rm S} \otimes \sum_{j=1}^d \int \mathrm{d}\omega\;\omega\, b^\dagger_j(\omega) b_j(\omega) + H_{\rm int} ,
	\end{equation}
	where $H_{\e}$ is the free Hamiltonian of the excited sector of the system, and the interaction term reads	
	\begin{equation}\label{hnint}
		H_{\rm int} = \sum_{j=1}^d\int \mathrm{d}\omega\;f_j(\omega) |0\>\!\<\e_j| \otimes b^\dagger_{j}(\omega) + \rm{h.c.},
	\end{equation}
	with $|\e_1\>,\dots,|\e_d\> \in \hilb_{\e}$ being a collection of linearly independent vectors in $\mathcal{H}_\e$, and where h.c. stands for the Hermitian conjugated term; each function $f_j(\omega)$ (form factor) modulates the coupling between the system and the $j$th mode of the bath. The boson creation and annihilation operators $b^\dagger_j(\omega)$ and $b_j(\omega)$ satisfy the standard canonical commutation relations: $[b_i(\omega),b_{j}(\omega')]=0$ and $[b_i(\omega),b^\dag_{j}(\omega')]=\delta_{ij}\delta(\omega-\omega')$, to be interpreted in the distributional sense. This model was first introduced in Ref.~\cite{garraway}, and further studied in~\cite{Davide-1}; it belongs to the class of generalized spin--boson models~\cite{arai,arai2,arai3,hirokawa,hirokawa2}, its self-adjointness being ensured whenever all functions $f_j$ are normalizable (square-integrable)~\cite{arai} but also for suitable classes of non-normalizable form factors~\cite{gsb}.
	
	For such models, after recalling the main properties of the unitary evolution in the single-excitation sector, we shall provide an explicit, constructive discussion of Markovianity and classicality in the limit in which all form factors are flat---that is, all boson frequencies, positive and negative, are coupled to the atom with the same strength. Precisely, in such a limit the model will be proven to be Markovian---in the sense of Eq.~\eqref{eq:markov}---with respect to measurements in an \textit{arbitrary} orthogonal basis. Furthermore, necessary and sufficient conditions for its classicality in this limit will be found, and discussed at the light of the existing general results in~\cite{Andrea-1,Andrea-2} about classicality in quantum processes.
	
	The paper is structured as follows. In Section~\ref{sec:general} we revise the exact computation of the dynamics of any state in the form $\ket{\Psi_0}=\ket{\psi_0}\otimes\ket{\vac}\in\hilb_{\rm S}\otimes\hilb_{\rm B}$ generated by $\textbf{H}$, derive the corresponding reduced dynamics on $\hilb_{\rm S}$, and study its divisibility properties. In Section~\ref{sec:flat} we take into account the limiting case of flat atom-field couplings, studying the corresponding properties of the reduced dynamics. Finally, Section~\ref{sec:marklass} is devoted to the main results of the paper: proving that the model is indeed Markovian, in the sense discussed above, with respect to arbitrary sharp measurements, and proving simple conditions under which it is (non--)classical. Some concluding remarks are outlined in Section~\ref{sec:conclusions}.
	
	\section{Generalities on the multilevel spin--boson model}\label{sec:general}
	
	In the following, without loss of generality we will set the initial time of the evolution induced by $\textbf{H}$ as $t_0=0$ unless otherwise stated.
	
	\subsection{Unitary and reduced dynamics}
	
	As discussed in~\cite{Davide-1} (see also~\cite{FL,hidden1} for the spin--boson case), the dynamics of an arbitrary state in the form
	\begin{equation}\label{ini}
		\ket{\Psi_0} = \ket{\psi_0}\otimes\ket{\vac},\;\;\text{with}\quad \ket{\psi_0}=\alpha |0\> + |\psi_{\e}\>,\;\;\alpha\in\mathbb{C},\;\ket{\psi_\e}\in\hilb_{\e}
	\end{equation}
	induced by the Hamiltonian $\textbf{H}$ in Eqs.~\eqref{Hn}--\eqref{hnint}, can be computed exactly. One finds that the time evolved state $\ket{\Psi_t}:=\e^{-\i t\mathbf{H}}\ket{\Psi_0}$ has the following form:	
	\begin{equation}
		\ket{\Psi_t}= \Big[\alpha |0\> + \ket{\psi_\e(t)}\Big] \otimes \ket{\vac} + \ket{0} \otimes \sum_{j=1}^d \int \mathrm{d}\omega\: \xi_j(t,\omega)  b^\dagger_j(\omega) \ket{\vac},
	\end{equation}
	with $|\psi_\e(t)\>\in\hilb_{\e}$ and the wavefunctions $\xi_j(t,\omega)$ satisfying the following system of equations, equivalent to the Schr\"odinger equation generated by $\textbf{H}$:
	\begin{eqnarray}\label{S1}
		\i |\dot{\psi}_\e(t)\> &=& H_\e |\psi_\e(t)\> +  \sum_{j=1}^d \int \mathrm{d}\omega\, f^*_j(\omega) \xi_j(t,\omega) |\e_j\> ,\\
		\i\, \dot{\xi}_j(t,\omega) &=& \omega\,\xi_j(t,\omega) + f_j(\omega) \< \e_j|\psi_\e(t)\>  , \ \ \ j=1,\ldots,d .\label{S2}
	\end{eqnarray}
	Solving Eq.~\eqref{S2}: 	
	\begin{equation}
		\xi_j(t,\omega) = -\i \int_0^t \mathrm{d}s\:\e^{-\i\omega(t-s)} f_j(\omega) \<\e_j|\psi_\e(s)\> ,
	\end{equation}
	and inserting the solution into Eq.~\eqref{S1}, one obtains	
	\begin{equation}
		|\psi_\e(t)\> = \A(t) |\psi_\e\> ,
	\end{equation}
	where the time-dependent operator $\A(t)\in\mathcal{B}(\hilb_{\e})$ satisfies the following non-local equation:	
	\begin{equation}\label{S1a}
		\i \dot{\A}(t) = H_\e \A(t) +   \int_0^t \mathrm{d}s\;\mathrm{G}(t-s)\A(s)  , \ \ \ \A(0) = \oper_\e ,
	\end{equation}
	and  the corresponding memory kernel $\mathrm{G}(t)\in\mathcal{B}(\hilb_{\e})$ has the following form:	
	\begin{equation}\label{eq:generator}
		\mathrm{G}(t) = -\i \sum_{j=1}^d \int \mathrm{d}\omega\:\e^{-\i\omega t} |f_j(\omega)|^2 |\e_j\>\!\<\e_j| .
	\end{equation}
	By using these properties, one easily finds the corresponding reduced evolution of the state of the system:	
	\begin{eqnarray}
		|\psi_0\>\!\<\psi_0| \to \Lambda_t(|\psi_0\>\!\<\psi_0|) =  {\rm Tr}_B \, |\Psi_t\>\!\<\Psi_t| &=& \A(t) |\psi_\e\>\!\<\psi_e|\A^\dagger(t) + \A(t) |\psi_\e\>\!\<0| + |0\>\!\<\psi_\e|\A^\dagger(t) \nonumber \\ &+& |0\>\!\<0| \Big( |\alpha|^2 + \<\psi_\e|\oper_e - \A^\dagger(t)\A(t)|\psi_e\> \Big).
	\end{eqnarray}
	By construction, the map $\Lambda_t$ is completely positive and trace preserving (CPTP) for all $t \geq 0$, and satisfies $\Lambda_{t=0} = \mathcal{I}_S$. Using a natural splitting of the system density operator,
	\begin{equation}
		\rho= \left( \begin{array}{cc} \hat{\rho}_\e & |\mathbf{w}\> \\  \< \mathbf{w}| & \rho_\mathrm{g} \end{array} \right) ,
	\end{equation}
	one finds the following expression for the reduced evolution:
	\begin{equation}\label{MAD}
		\Lambda_t(\rho) = \left( \begin{array}{cc} \A(t)\hat{\rho}_\e \A^\dagger(t) & \A(t)|\mathbf{w}\> \\  \< \mathbf{w}|\A^\dagger(t) & \rho_\mathrm{g}(t) \end{array} \right) ,
	\end{equation}
	with $\rho_\mathrm{g}(t) = {\rm Tr}\,\rho -  {\rm Tr}( \A(t) \hat{\rho}_\e \A^\dagger(t))$. It is therefore clear that the entire dynamical map $\Lambda_t$ is uniquely characterized by the operator $\A(t)$, which we shall denote as the \textit{survival amplitude operator}, since the quantity $\| \A(t)|\psi_\e\>\|^2$ defines the probability that the initial state $|\psi_\e\> \otimes |\vac\>$ does not decay to the ground state at the time $t$. By construction, the following normalization condition holds:	
	\begin{equation}
		\| \A(t)|\psi_\e\>\|^2 + \sum_{j=1}^d \|\xi_j(t)\|^2 =  \| |\psi_\e\>\|^2 ,
	\end{equation}
	which accounts for the property ${\rm Tr}\,\Lambda_t(\rho) = {\rm Tr}\,\rho$ ensuring the preservation of the trace.
	
	The operator $\A(t) : \mathcal{H}_\e \to \mathcal{H}_\e$ reduces to a single function $a(t)$ in the qubit case and Eq.~\eqref{MAD} reduces to the well-known amplitude-damping qubit channel,	
	\begin{equation}
		\Lambda_t(\rho) = \left( \begin{array}{cc} |a(t)|^2 {\rho}_{11} & a(t) \rho_{10} \\  a^*(t) \rho_{01} & \rho_{00} + (1-|a(t)|^2)\rho_{11}  \end{array} \right);
	\end{equation}
	as such, we shall refer to $\Lambda_t$ as a \textit{multilevel amplitude-damping channel}. We point out that other generalizations of the amplitude-damping channel were analyzed in Refs.~\cite{giovannetti,wilde}; besides, a family of quantum channels further generalizing the structure~\eqref{MAD}, labeled as \textit{excitation-damping channels}, was recently introduced and studied in~\cite{Davide-3}.
	
	The following simple characterization of positivity and complete positivity for channels in the form~\eqref{MAD} was proven in~\cite{Davide-1}:	
	\begin{proposition}[\!\!\cite{Davide-1}] The dynamical map~\eqref{MAD} is completely positive if and only if $\A(t)$ is a contraction w.r.t. the operator norm, i.e. $\|\A(t)\|_{\rm op} \leq 1$ for all $t \geq 0$. Besides, it is completely positive if and only if it is positive.
	\end{proposition}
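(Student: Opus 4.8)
The plan is to prove both claims at once by establishing the cycle of implications
\[
\text{CP} \;\Longrightarrow\; \text{positivity} \;\Longrightarrow\; \|\A(t)\|_{\rm op}\le 1 \;\Longrightarrow\; \text{CP}.
\]
The first implication is immediate, since complete positivity is by definition stronger than positivity. Thus only the second and third arrows require work: positivity must force $\A(t)$ to be a contraction, and the contraction property must in turn guarantee complete positivity. The cycle then shows that all three properties coincide, which is exactly the content of the proposition.

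For the middle arrow I would simply probe $\Lambda_t$ on a pure excited state. Fixing $|\phi\rangle\in\hilb_\e$ and taking $\rho=|\phi\rangle\!\langle\phi|$, so that $\hat{\rho}_\e=|\phi\rangle\!\langle\phi|$, $|\mathbf{w}\rangle=0$ and $\rho_\g=0$, the block form~\eqref{MAD} yields an output which is block-diagonal: its excited block $\A(t)|\phi\rangle\!\langle\phi|\A^\dagger(t)$ is automatically positive, while its ground block is the scalar $\rho_\g(t)=\|\phi\|^2-\|\A(t)|\phi\rangle\|^2$. Since $\hilb_\g$ is one-dimensional, positivity of the output is equivalent to $\rho_\g(t)\ge 0$, i.e. $\|\A(t)|\phi\rangle\|^2\le\|\phi\|^2$; as $|\phi\rangle$ is arbitrary, this is precisely $\|\A(t)\|_{\rm op}\le 1$.

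For the last arrow I would exhibit an explicit Kraus decomposition. Let $B(t)=\oper_\e-\A^\dagger(t)\A(t)$, which is positive exactly when $\A(t)$ is a contraction, let $\Pi_\e$ be the orthogonal projection of $\hilb_{\rm S}$ onto $\hilb_\e$, and let $\{|k\rangle\}$ be an orthonormal basis of $\hilb_\e$. With
\[
K_0 = \A(t)\,\Pi_\e + |0\rangle\!\langle 0|, \qquad K_k = |0\rangle\!\langle k|\sqrt{B(t)}\,\Pi_\e,
\]
a direct computation shows that $K_0\rho K_0^\dagger$ reproduces the excited block, both off-diagonal blocks and the undamped part $\rho_\g$ of the ground population in~\eqref{MAD}, whereas $\sum_k K_k\rho K_k^\dagger=\tr\!\big(B(t)\hat{\rho}_\e\big)\,|0\rangle\!\langle 0|$ supplies exactly the missing decay contribution. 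Hence $\Lambda_t(\cdot)=K_0(\cdot)K_0^\dagger+\sum_k K_k(\cdot)K_k^\dagger$ is manifestly completely positive; one may additionally check $K_0^\dagger K_0+\sum_k K_k^\dagger K_k=\oper_{\rm S}$, recovering trace preservation.

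The genuinely delicate point is the bookkeeping in this last step: the single coherent Kraus operator $K_0$ must be arranged to generate both off-diagonal blocks $\A(t)|\mathbf{w}\rangle$ and $\langle\mathbf{w}|\A^\dagger(t)$ together with the surviving ground term $\rho_\g$, so that the remaining incoherent operators $K_k$ only transfer the decayed population $\tr(B(t)\hat{\rho}_\e)$ into $|0\rangle\!\langle 0|$ without disturbing any other block. Once this ansatz is guessed the verification is routine, the existence of $\sqrt{B(t)}$ being guaranteed by the positivity of $B(t)$---equivalently, by the very contraction property supplied by the previous arrow.
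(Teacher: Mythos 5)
The paper itself contains no proof of this proposition: it is imported, with attribution, from Ref.~\cite{Davide-1} (``was proven in~\cite{Davide-1}''), so there is no internal argument to compare yours against, and your proposal must stand on its own merits --- which it does. The cycle CP $\Rightarrow$ positivity $\Rightarrow$ $\|\A(t)\|_{\rm op}\leq 1$ $\Rightarrow$ CP is the right economy, since it delivers both equivalences in the statement simultaneously. Both nontrivial arrows check out. For the middle one, probing Eq.~\eqref{MAD} with $\rho=\ketbra{\phi}{\phi}$, $\ket{\phi}\in\hilb_\e$, gives a block-diagonal output whose ground entry is $\mathrm{Tr}\,\rho-\mathrm{Tr}(\A(t)\hat{\rho}_\e\A^\dagger(t))=\|\,\ket{\phi}\|^2-\|\A(t)\ket{\phi}\|^2$, and nonnegativity of this scalar for every $\ket{\phi}$ is exactly the contraction property. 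For the last one, your Kraus ansatz verifies: writing $\rho=\hat{\rho}_\e+\ketbra{\mathbf{w}}{0}+\ketbra{0}{\mathbf{w}}+\rho_\g\ketbra{0}{0}$, one finds $K_0\rho K_0^\dagger=\A(t)\hat{\rho}_\e\A^\dagger(t)+\A(t)\ketbra{\mathbf{w}}{0}+\ketbra{0}{\mathbf{w}}\A^\dagger(t)+\rho_\g\ketbra{0}{0}$ and $\sum_k K_k\rho K_k^\dagger=\mathrm{Tr}\bigl[(\oper_\e-\A^\dagger(t)\A(t))\hat{\rho}_\e\bigr]\ketbra{0}{0}$, whose sum reproduces \eqref{MAD} exactly because $\rho_\g+\mathrm{Tr}\,\hat{\rho}_\e=\mathrm{Tr}\,\rho$; moreover the cross terms in $K_0^\dagger K_0$ vanish since $\A(t)$ has range and support inside $\hilb_\e$, yielding $K_0^\dagger K_0+\sum_k K_k^\dagger K_k=\Pi_\e+\ketbra{0}{0}=\oper_{\rm S}$, so the decomposition is also trace-preserving. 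Finally, there is no circularity in invoking $\sqrt{B(t)}$: that arrow assumes the contraction property as its hypothesis, which is precisely what makes $B(t)=\oper_\e-\A^\dagger(t)\A(t)$ positive semidefinite. Your argument is thus a complete, self-contained proof of both claims, and has the added value of making the paper's statement verifiable without consulting the external reference.
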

	As such, channels as in Eq.~\eqref{MAD} represent an interesting example of channels for which complete positivity and positivity coincide, another important instance being multilevel dephasing channels (or Hadamard channels)~\cite{Benatti,Davide-2}.
	
	\subsection{Markovianity and divisibility of the model}
	
	Recall that a dynamical map $\Lambda_t$ is divisible whenever for any $t \geq s$ one has $\Lambda_t = \Lambda_{t,s} \Lambda_s$ for some corresponding propagator $\Lambda_{t,s} : \mathcal{B}(\mathcal{H}) \to \mathcal{B}(\mathcal{H})$; it is CP-divisible when $\Lambda_{t,s}$ is CPTP and P-divisible when $\Lambda_{t,s}$ is positive and trace-preserving. CP-divisibility is a common core of the various (and generally inequivalent) definitions of \textit{Markovianity} in open quantum systems. Clearly, every invertible map is also divisible with
	\begin{equation}\label{eq:lambdats}
		\Lambda_{t,s} = \Lambda_t \Lambda_s^{-1},
	\end{equation}
	and is thus CP-divisible (resp. P-divisible) if and only if the map in Eq.~\eqref{eq:lambdats} is completely positive (resp. positive) for all $t\geq s$. In particular, invertible and differentiable channels satisfy the master equation $\dot\Lambda_t(\rho)=\mathcal{L}_t\left(\Lambda_t(\rho)\right)$, with the time-local generator $\mathcal{L}_t=\dot\Lambda_t\Lambda_t^{-1}$, and it is known that $\Lambda_t$ is CP-divisible if and only if $\mathcal{L}_t$ is a time-dependent Gorini--Kossakowski--Lindblad--Sudarshan (GKLS) map \cite{GKS,L}.
	
	In our case, a simple computation shows that the multilevel amplitude-damping channel $\Lambda_t$ as defined in Eq.~\eqref{MAD} is invertible if and only if the corresponding survival amplitude operator $\A(t)$ is invertible. In such a case, the channel is divisible with
	\begin{equation}
		\Lambda_{t,s}(\rho) = \left( \begin{array}{cc} \A(t,s)\hat{\rho}_\e \A^\dagger(t,s) & \A(t,s)|\mathbf{w}\> \\  \< \mathbf{w}| \A^\dagger(t,s) & \rho_\mathrm{g}(t,s) \end{array} \right)
	\end{equation}
	where $\A(t,s) = \A(t)\A^{-1}(s)$ and  $\rho_\mathrm{g}(t,s) = {\rm Tr}\rho - {\rm Tr}[\A(t,s)\hat{\rho}_\e \A^\dagger(t,s)] $. This readily implies the following characterization of CP-divisibility and P-divisibility.	
	\begin{proposition}[\!\!\cite{Davide-1}] Let the amplitude operator $\A(t)$ be invertible for all $t \geq 0$. Then $\Lambda_t$ is CP-divisible if and only if		
		\begin{equation}\label{eq:contraction}
			\| \A(t,s) \|_{\rm op} \leq 1 ,
		\end{equation}
		for all $t \geq s$, and it is P-divisible if and only if it is CP-divisible.
	\end{proposition}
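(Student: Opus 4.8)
The plan is to deduce this statement as a direct corollary of the structural characterization of positivity and complete positivity for maps of the form~\eqref{MAD} established in the first Proposition above. The crucial observation, already apparent from the displayed expression for the propagator~\eqref{eq:lambdats}, is that $\Lambda_{t,s}$ has \emph{exactly} the algebraic form~\eqref{MAD}, the only difference being that the survival amplitude operator $\A(t)$ is replaced by the two-parameter operator $\A(t,s)=\A(t)\A^{-1}(s)$. Since $\A(t)$ is assumed invertible for every $t\geq 0$, this $\A(t,s)\in\mathcal{B}(\hilb_\e)$ is well defined and bounded for all $t\geq s$, so $\Lambda_{t,s}$ is a legitimate channel of multilevel amplitude-damping type; moreover, by the very definition of $\rho_\mathrm{g}(t,s)$ it is trace-preserving by construction, so only the (complete) positivity of $\Lambda_{t,s}$ remains to be examined.

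The decisive step is then to apply the first Proposition \emph{verbatim} to $\Lambda_{t,s}$, reading $\A(t,s)$ in place of $\A(t)$: this yields at once that $\Lambda_{t,s}$ is completely positive if and only if $\|\A(t,s)\|_{\rm op}\leq 1$, and that $\Lambda_{t,s}$ is completely positive if and only if it is positive. From here the two claims follow immediately. By definition, $\Lambda_t$ is CP-divisible precisely when $\Lambda_{t,s}$ is CPTP for all $t\geq s$; since trace preservation is automatic, this is equivalent to the contraction condition~\eqref{eq:contraction} holding for all $t\geq s$. Likewise, $\Lambda_t$ is P-divisible precisely when each $\Lambda_{t,s}$ is positive and trace-preserving, and the second half of the first Proposition turns positivity of $\Lambda_{t,s}$ into its complete positivity; hence P-divisibility and CP-divisibility coincide.

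The only point requiring genuine care—indeed the conceptual heart of the argument—is the justification that the first Proposition may be invoked for the propagator rather than for $\Lambda_t$ itself. Here one must emphasize that its characterization is purely \emph{structural}: it depends solely on the algebraic shape~\eqref{MAD} and on the operator that appears in it, and not at all on the non-local equation~\eqref{S1a} from which $\A(t)$ originates. In particular $\A(t,s)$ need neither solve~\eqref{S1a} nor arise as a bona fide survival amplitude; what matters is only that $\Lambda_{t,s}$ displays the form~\eqref{MAD} for \emph{some} bounded operator on $\hilb_\e$. Once this structural reading of the first Proposition is made explicit, the remainder of the proof is a direct transcription.
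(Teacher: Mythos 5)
Your proposal is correct and takes essentially the same approach as the paper: the paper likewise computes that the propagator $\Lambda_{t,s}=\Lambda_t\Lambda_s^{-1}$ inherits the form~\eqref{MAD} with $\A(t,s)=\A(t)\A^{-1}(s)$ in place of $\A(t)$, and then states that the structural characterization of (complete) positivity in the first Proposition ``readily implies'' the claim. Your explicit justification that the first Proposition applies to any bounded operator appearing in the form~\eqref{MAD}, regardless of whether it solves~\eqref{S1a}, merely spells out what the paper leaves implicit.
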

	Interestingly, CP-divisibility and P-divisibility are equivalent properties for the multilevel amplitude-damping channel. We also notice that the property~\eqref{eq:contraction} is equivalent, provided that $t\mapsto\A(t)$ is differentiable, to the following one:
	\begin{equation}\label{eq:contraction2}
		\frac{\mathrm{d}}{\mathrm{d}t}\, \| \A(t)|\psi_\e\> \| \leq 0 ,
	\end{equation}
	for any $|\psi_\e\> \in \mathcal{H}_\e$. In other words:	
	\begin{corollary}
		Let $\Lambda_t$ being invertible and differentiable. Then $\Lambda_t$ is a monotonic contraction w.r.t. trace norm in $\mathcal{B}(\mathcal{H})$ if and only if $\A(t)$ is a monotonic contraction in $\mathcal{H}_\e$ (w.r.t. the natural norm induced by the inner product in $\mathcal{H}_\e$).		
	\end{corollary}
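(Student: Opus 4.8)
The plan is to recognise this corollary as a repackaging of the divisibility characterisation just established, combined with the standard fact that, for invertible dynamics, P-divisibility coincides with monotone contractivity of the trace norm. Throughout I read ``$\Lambda_t$ is a monotonic contraction w.r.t.\ the trace norm'' as the requirement that $t\mapsto\|\Lambda_t(X)\|_1$ be non-increasing for every self-adjoint $X\in\mathcal{B}(\mathcal{H})$ (equivalently, that $\|\Lambda_t(\rho)-\Lambda_t(\sigma)\|_1$ be non-increasing for every pair of states), and ``$\A(t)$ is a monotonic contraction'' as condition~\eqref{eq:contraction2}. Since $\Lambda_{t=0}=\mathcal{I}_S$ and $\A(0)=\oper_\e$, in either reading the contraction property at $t=0$ is automatic, so only the monotonicity carries content.

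First I would prove the equivalence between monotone trace-norm contractivity of $\Lambda_t$ and its P-divisibility. The forward implication is immediate: if $\Lambda_t=\Lambda_{t,s}\Lambda_s$ with $\Lambda_{t,s}$ positive and trace-preserving, then, since every positive trace-preserving map contracts the trace norm on self-adjoint operators, $\|\Lambda_t(X)\|_1=\|\Lambda_{t,s}(\Lambda_s(X))\|_1\le\|\Lambda_s(X)\|_1$, which is the desired monotonicity. For the converse I would exploit invertibility through~\eqref{eq:lambdats}: since $\Lambda_s$ is bijective, every $Y$ is of the form $\Lambda_s(X)$ with $X=\Lambda_s^{-1}(Y)$, so monotonicity gives $\|\Lambda_{t,s}(Y)\|_1=\|\Lambda_t(X)\|_1\le\|\Lambda_s(X)\|_1=\|Y\|_1$; hence the propagator $\Lambda_{t,s}$ is a trace-norm contraction. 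Noting that $\Lambda_{t,s}$ is both trace-preserving and Hermiticity-preserving (being a composition of such maps, with $\Lambda_s^{-1}$ again of this type because $\Lambda_s$ is an invertible CPTP map), one concludes it is positive: for any state $\rho$ the image $Y=\Lambda_{t,s}(\rho)$ is self-adjoint with $\tr Y=1$ and $\|Y\|_1\le1$, and decomposing $Y=Y_+-Y_-$ into positive and negative parts forces $\tr Y_-=0$, i.e.\ $Y\ge0$. Thus $\Lambda_{t,s}$ is PTP for all $t\ge s$ and $\Lambda_t$ is P-divisible.

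With this equivalence established, the corollary follows by chaining the results already available. By the preceding proposition on CP- and P-divisibility (the one assuming $\A(t)$ invertible), P-divisibility of $\Lambda_t$ is equivalent to the operator contraction~\eqref{eq:contraction}, $\|\A(t,s)\|_{\rm op}\le1$ for all $t\ge s$, and the latter was already observed to be equivalent---under differentiability of $t\mapsto\A(t)$---to~\eqref{eq:contraction2}, namely to $\A(t)$ being a monotonic contraction on $\mathcal{H}_\e$. Concatenating the equivalences ``$\Lambda_t$ monotonic contraction in trace norm $\Leftrightarrow$ P-divisible $\Leftrightarrow$~\eqref{eq:contraction} $\Leftrightarrow$~\eqref{eq:contraction2}'' yields precisely the asserted statement.

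I expect the only genuinely delicate point to be the converse half of the first step, i.e.\ upgrading the trace-norm contractivity of $\Lambda_{t,s}$ to genuine positivity. The elementary decomposition argument above works only because it rests on \emph{trace preservation together with Hermiticity preservation} of $\Lambda_{t,s}$, and these in turn require the invertibility hypothesis (so that $\Lambda_s^{-1}$, and hence $\Lambda_{t,s}$, is again trace- and Hermiticity-preserving). Everything else is bookkeeping over the equivalences furnished by the two preceding propositions.
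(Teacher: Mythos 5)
Your proof is correct and follows essentially the same route the paper intends: the corollary is a restatement of the preceding proposition (P-divisibility $\Leftrightarrow$ CP-divisibility $\Leftrightarrow$ Eq.~\eqref{eq:contraction}) combined with the observed equivalence of Eqs.~\eqref{eq:contraction} and~\eqref{eq:contraction2} under differentiability. The only difference is that you explicitly prove the folklore equivalence between P-divisibility and monotone trace-norm contractivity for invertible, trace- and Hermiticity-preserving dynamics (via the $\tr Y_- = 0$ decomposition argument), whereas the paper invokes it implicitly as a known fact; your fleshed-out version is sound, including the verification that $\Lambda_s^{-1}$ inherits trace and Hermiticity preservation.
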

	The time-dependent generator associated with the channel reads
	\begin{equation}\label{L}
		\mathcal{L}_t(\rho) = \left( \begin{array}{cc} \L(t)\hat{\rho}_\e + \hat{\rho}_\e \L^\dagger(t) & \L(t)|\mathbf{w}\> \\  \< \mathbf{w}|\L^\dagger(t) & -{\rm Tr}(\hat{\rho}_\e[ \L(t) + \L^\dagger(t)]) \end{array} \right) ,
	\end{equation}
	where	
	\begin{equation}
		\L(t) = \dot{\A}(t)\A^{-1}(t) ,
	\end{equation}
	i.e. $\A(t)$ satisfies the following time-local dynamical equation	
	\begin{equation}
		\dot{\A}(t)  = \L(t)\A(t) \ , \ \ \ \A(0) =\oper_\e ,
	\end{equation}
	and hence the corresponding propagator in $\mathcal{H}_\e$ formally reads	
	\begin{equation}
		\A(t,s) = \mathcal{T}\exp\Bigl( \int_{s}^t \L(\tau)\,\mathrm{d}\tau \Bigr),
	\end{equation}
	with $\mathcal{T}$ signaling time ordering. Finally, by defining the time-dependent Hermitian operators
	\begin{equation}
		\H_\e(t) := \frac{\i}{2} [\L(t) - \L^\dagger(t)],\qquad 	{\Gamma}(t) := - [\L(t) + \L^\dagger(t)] ,
	\end{equation}
	the corresponding time-local generator can be written as
	\begin{equation}
		\mathcal{L}_t(\rho) = -\i[\H_\e(t),\rho] - \frac 12 \{ \Gamma(t), \rho\} + {\rm Tr}(\Gamma(t) \rho) \, |0\>\!\<0|  ,
	\end{equation}
	which is manifestly a GKLS generator provided that $\Gamma(t)\geq0$, that is, the operator $\Gamma(t)$ is positive semidefinite at all times. Consequently,
	\begin{proposition}
		The reduced evolution induced by a multilevel spin--boson model is CP-divisible if and only if $\Gamma(t)\geq0$ for all $t\geq0$.
	\end{proposition}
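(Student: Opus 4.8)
The plan is to reduce the statement to the characterizations of CP-divisibility already proven above, under the standing assumption---implicit in the very definition of $\mathcal{L}_t$, $\L(t)$ and hence $\Gamma(t)$---that $\A(t)$ is invertible and differentiable for all $t\geq0$. Two routes are available. The quickest sufficiency argument is the one already flagged in the text: when $\Gamma(t)\geq0$ one diagonalizes $\Gamma(t)=\sum_k\gamma_k(t)|v_k(t)\rangle\langle v_k(t)|$ with $\gamma_k(t)\geq0$ and sets the jump operators $L_k(t)=\sqrt{\gamma_k(t)}\,|0\rangle\langle v_k(t)|$; a direct check that $\sum_k L_k^\dagger(t)L_k(t)=\Gamma(t)$ and $\sum_k L_k(t)\rho L_k^\dagger(t)={\rm Tr}(\Gamma(t)\rho)\,|0\rangle\langle 0|$ then exhibits $\mathcal{L}_t$ as a genuine GKLS generator, whence $\Lambda_t$ is CP-divisible. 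The delicate part is the converse, for which I prefer a single computation settling both implications at once.

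That computation is the following. Since $\dot\A(t)=\L(t)\A(t)$ and $\Gamma(t)=-[\L(t)+\L^\dagger(t)]$, differentiating the positive operator $\A^\dagger(t)\A(t)$ gives
\begin{equation}
	\frac{\mathrm{d}}{\mathrm{d}t}\big[\A^\dagger(t)\A(t)\big]=\A^\dagger(t)\big[\L(t)+\L^\dagger(t)\big]\A(t)=-\,\A^\dagger(t)\,\Gamma(t)\,\A(t).
\end{equation}
Contracting with an arbitrary $|\psi_\e\rangle\in\hilb_\e$ and writing $|\phi\rangle=\A(t)|\psi_\e\rangle$ yields
\begin{equation}
	\frac{\mathrm{d}}{\mathrm{d}t}\,\big\|\A(t)|\psi_\e\rangle\big\|^2=-\langle\phi|\Gamma(t)|\phi\rangle .
\end{equation}

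Now I would invoke the preceding Corollary (equivalently the Proposition giving CP-divisibility $\Leftrightarrow\|\A(t,s)\|_{\rm op}\leq1$, recast via Eq.~\eqref{eq:contraction2}): CP-divisibility of $\Lambda_t$ is equivalent to $\frac{\mathrm{d}}{\mathrm{d}t}\|\A(t)|\psi_\e\rangle\|\leq0$ for every $|\psi_\e\rangle$, which---away from the zeros of the survival norm, and trivially at them---is the same as non-positivity of the derivative of its square. Hence CP-divisibility holds if and only if $\langle\phi|\Gamma(t)|\phi\rangle\geq0$ for all $|\phi\rangle$ in the range of $\A(t)$. Because $\A(t)$ is invertible this range is all of $\hilb_\e$, so the condition is precisely $\Gamma(t)\geq0$ for every $t\geq0$. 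Equivalently, the final step is a congruence argument: $\A^\dagger(t)\Gamma(t)\A(t)\geq0$ together with invertibility of $\A(t)$ forces $\Gamma(t)\geq0$, and conversely.

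I expect the only genuine subtlety to be bookkeeping rather than conceptual: one must keep the standing invertibility hypothesis explicit (it is what makes $\L(t)$, and therefore $\Gamma(t)$, well defined, and what upgrades positivity of the congruence $\A^\dagger\Gamma\A$ to positivity of $\Gamma$), and one must justify passing between monotonicity of $\|\A(t)|\psi_\e\rangle\|$ and of its square at the isolated instants where the survival norm vanishes. The GKLS route could alternatively supply the converse too, by appealing to uniqueness of the canonical splitting of the time-local generator into its Hamiltonian and dissipative parts and to positivity of the associated Kossakowski matrix; but the congruence identity above sidesteps that machinery and renders the equivalence transparent.
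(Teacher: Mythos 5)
Your proof is correct, and it is worth separating its two halves when comparing with the paper. Your sufficiency argument is exactly the paper's proof: the paper has already cast the time-local generator as $\mathcal{L}_t(\rho)=-\i[\H_\e(t),\rho]-\frac{1}{2}\{\Gamma(t),\rho\}+{\rm Tr}(\Gamma(t)\rho)\,|0\>\!\<0|$, declares it ``manifestly GKLS'' when $\Gamma(t)\geq0$, and concludes via the general fact, recalled earlier for invertible differentiable channels, that CP-divisibility holds if and only if $\mathcal{L}_t$ is GKLS at all times; your explicit jump operators $L_k(t)=\sqrt{\gamma_k(t)}\,|0\>\!\<v_k(t)|$ merely make the word ``manifestly'' precise. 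Where you genuinely depart from the paper is the converse: the paper leaves it implicit, resting on the same general equivalence and hence, ultimately, on the uniqueness of the canonical splitting of a time-local generator (positivity of the Kossakowski matrix) --- precisely the machinery you chose to avoid. Your replacement is the congruence identity
\begin{equation}
	\frac{\mathrm{d}}{\mathrm{d}t}\bigl[\A^\dagger(t)\A(t)\bigr]=\A^\dagger(t)\bigl[\L(t)+\L^\dagger(t)\bigr]\A(t)=-\,\A^\dagger(t)\,\Gamma(t)\,\A(t),
\end{equation}
which reduces both directions to the model-specific criteria already proven in the paper (CP-divisibility $\iff\|\A(t,s)\|_{\rm op}\leq1\iff$ monotonicity of $\|\A(t)|\psi_\e\>\|$), with invertibility of $\A(t)$ upgrading positivity of the congruence $\A^\dagger\Gamma\A$ to positivity of $\Gamma$. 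This buys a self-contained, elementary proof of the ``only if'' direction that never invokes canonical-form uniqueness, at the modest cost of leaning on the earlier contraction proposition and its corollary. One small simplification you could make: your caveat about ``isolated instants where the survival norm vanishes'' is vacuous, since under the standing invertibility hypothesis $\|\A(t)|\psi_\e\>\|>0$ for every $|\psi_\e\>\neq0$, so the passage between monotonicity of the norm and of its square needs no extra justification.
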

	
	\section{The flat coupling limit}\label{sec:flat}
	
	We shall focus hereafter on a particular choice of the form factors $f_j(\omega)$ modulating the coupling between the system and the $j$th mode of the field. Namely, we shall consider the case of \textit{flat} couplings on all (positive and negative) values of $\omega$:
	\begin{equation}\label{eq:flat}
		|f_j(\omega)|^2 = \frac{\gamma_j}{2\pi}  , \quad-\infty\leq\omega\leq\infty,
	\end{equation}
	with $\gamma_j\geq0$ being a coupling constant. This choice of coupling corresponds, in the position representation, to the case in which the coupling between the system and each mode of the field is described by a Dirac delta interaction; as such, it provides an idealized description of short-range system-field interactions. Clearly, such form factors are not normalizable, whence the self-adjointness of the corresponding multilevel spin--boson model is not guaranteed \textit{a priori}; these mathematically issues are thoroughly discussed in greater generality in~\cite{FL,gsb}. For our purposes, here it will suffice to observe that the equations of motion~\eqref{S1}--\eqref{S2} induced by the model are indeed well-posed in this limit. Indeed, substituting Eq.~\eqref{eq:flat} into the expression~\eqref{eq:generator}, the memory kernel $\mathrm{G}(t)$ collapses into a Dirac delta:
	\begin{equation}
		\mathrm{G}(t-s)=-\i\delta(t-s)\sum_{j=1}^d\gamma_j\ketbra{\e_j}{\e_j},
	\end{equation}
	that is, Eq.~\eqref{S1a}, and thus both Eqs.~\eqref{S1}--\eqref{S2}, are \textit{memoryless}: they do not involve any dependence on past times.
	
	\subsection{Dynamics and reduced dynamics}
	Specifically, Eq.~\eqref{S1a} for the survival amplitude operator simplifies as follows:
	\begin{equation}
		\i\dot\A(t)=\left(\H_e-\frac{\i}{2}\Gamma\right)\A(t),
	\end{equation}
	where
	\begin{equation}
		\Gamma = 2\sum_{j=1}^d \gamma_j \ketbra{\e_j}{\e_j}\geq0
	\end{equation}
	and its solution simply reads
	\begin{equation}\label{eq:exp}
		\A(t)=\e^{-t\left(\i\H_\e+\frac{1}{2}\Gamma\right)}.
	\end{equation}
	Consequently, the solution of the system~\eqref{S1}--\eqref{S2} reads as follows:
	\begin{eqnarray}\label{eq:evol_psi}
		\ket{\psi_\e(t)}&=&\e^{-t\left(\i\H_\e+\frac{1}{2}\Gamma\right)}\ket{\psi_\e},\\
		\label{xi-A}
		{\xi}_j(t,\omega) &=& -\i \sqrt{\frac{\gamma_j}{2\pi}} \int_0^t \mathrm{d}s \;\e^{-\i\omega(t-s)} \Braket{\e_j|\e^{-s\left(\i \H_\e+\frac{1}{2}\Gamma\right)}|\psi_e}.
	\end{eqnarray}
	Two important remarks follow.
	
	\paragraph{Wavefunction in the position representation.} Interestingly, Eq.~\eqref{xi-A} corresponds, in the position representation, to a compactly supported wavefunction. Indeed, defining the  Fourier antitransform of $\xi_j(t,\omega)$ via $
	\hat{\xi}_j(t,x)=\int\mathrm{d}\omega\;\e^{\i\omega x}\xi_j(t,\omega)
	$, one easily finds
	\begin{equation}\label{eq:propa}
		\hat{\xi}_j(t,x) = -\i \sqrt{2\pi \gamma_j} \int_0^t \mathrm{d}s \Braket{\e_j|\e^{-(t-x)\left(\i \H_\e+\frac{1}{2}\Gamma\right)}|\psi_e} \, \chi_{[0,t]}(x) ,
	\end{equation}
	where $\chi_I(x)$ is the characteristic function of the interval $I$. Hence, $\hat{\xi}_j(t,x)$ is {entirely supported} on $[0,t]$, for positive times $t>0$. This result is analogous to the one obtained (in the single-atom case) in~\cite{hidden2}  and has the following intuitive explanation: preparing the system in the state $\ket{\psi}\otimes\ket{\vac}$ (multilevel atom in a generic state of the excited sector + all modes of the boson field in the vacuum state), the system will be in a superposition of a state in which no bosons are emitted, plus all possible states in which the system decays and a boson is emitted at the position $x=0$ in \textit{only one} of all possible modes: for each mode, the emitted boson moves from left to right at unit speed (since we keep $c=1$), whence the wavefunction at time $t$ will be supported at $[0,t]$. This property of $\xi_j(t,\omega)$ will be crucial to prove Prop.~\ref{PRO-A}, which, in turn, will play important role in the analysis of Markovianity of the quantum measurement process in the next section.
	
	\paragraph{Case $[\H_\e,\Gamma]=0$.} In this case the vectors $\{\ket{\e_j}\}_j$ are \textit{eigenvectors} of $\Gamma$ and thus $\H_\e=\sum_j\omega_j\ketbra{\e_j}{e_j}$, whence Eqs.~\eqref{eq:evol_psi}--\eqref{xi-A} read explicitly
	\begin{eqnarray}
		\ket{\psi_\e(t)}&=&\sum_\ell\braket{\e_\ell|\psi_\e}\e^{-t\left(\i \omega_\ell+\frac{1}{2}\gamma_\ell\right)}\ket{\e_\ell},\\
		{\xi}_j(t,\omega) &=& -\i \sqrt{\frac{\gamma_j}{2\pi}} \braket{\e_j|\psi_\e}\int_0^t \mathrm{d}s \;\e^{-\i\omega(t-s)} \e^{-s\left(\i \omega_j+\frac{1}{2}\gamma_j\right)},
	\end{eqnarray}
	that is, the survival operator is simply given by
	\begin{equation}\label{eq:at-diag}
		\A(t)=\sum_\ell\e^{-t\left(\i \omega_\ell+\frac{1}{2}\gamma_\ell\right)}\ketbra{\e_\ell}{\e_\ell};
	\end{equation}
	furthermore, in the position representation,
	\begin{equation}
		\hat{\xi}_j(t,x) = -\i \sqrt{2\pi \gamma_j}\braket{\e_j|\psi_\e} \int_0^t \mathrm{d}s\;\e^{-(t-x)\left(\i \omega_j+\frac{1}{2}\gamma_j\right)}\, \chi_{[0,t]}(x).
	\end{equation}
	In particular, if $\ket{\psi_\e}=\ket{\e_\alpha}$ for some fixed $\alpha=1,\dots,d$, then
	\begin{eqnarray}
		\ket{\psi_\e(t)}&=&\e^{-t\left(\i \omega_\alpha+\frac{1}{2}\gamma_\alpha\right)}\ket{\e_\alpha},\\
		{\xi}_j(t,\omega) &=& -\i \sqrt{\frac{\gamma_j}{2\pi}} \delta_{j\alpha}\int_0^t \mathrm{d}s\;\e^{-(t-x)\left(\i \omega_j+\frac{1}{2}\gamma_j\right)}\, \chi_{[0,t]}(x),
	\end{eqnarray}
	that is, the evolution of the system only involves $\ket{\e_\alpha}$ and the $\alpha$th mode of the boson field, effectively behaving as a spin--boson model. The multilevel, multimode nature of the model only emerges when taking into account different initial states.
	
	As a final observation, notice that the equality $[\H_\e,\Gamma]=0$ holds if and only if $\A(t)$ is a \textit{normal} matrix, i.e. $[\A(t),\A^\dag(t)]=0$. Indeed, by the properties of matrix exponentials,
	\begin{equation}
		[\A(t),\A^\dag(t)]=0\;\;\forall t\in\mathbb{R}\iff0=\left[\i\H_\e+\frac{1}{2}\Gamma,-\i\H_\e+\frac{1}{2}\Gamma\right]=\i[\H_\e,\Gamma].
	\end{equation}
	
	Finally, as for the reduced dynamics, the corresponding channel $\Lambda_t$ is a \textit{semigroup}, i.e. $\Lambda_{t}=\Lambda_{t-s}\Lambda_s$, as can be either seen by directly substituting Eq.~\eqref{eq:exp} into the expression~\eqref{MAD} of the channel or, equivalently, by simply observing that the expression~\eqref{L} of its time-local generator reduces to
	\begin{eqnarray}
		\mathcal{L}_t(\rho)\equiv\mathcal{L}(\rho)&=&-\i[\H_\e,\rho] - \frac 12 \{ \Gamma, \rho\}+ {\rm Tr}(\Gamma \rho) \, |0\>\!\<0|  \nonumber\\
		&=&  -\i[\H_\e,\rho] + \sum_{j=1}^d \gamma_j \Big( |0\>\!\<\e_j|\rho|\e_j\>\!\<0| - \frac 12 \{ |\e_j\>\!\<\e_j|,\rho\} \Big),
	\end{eqnarray}
	which is clearly a time-independent GKLS generator.	\textit{A fortiori}, the channel is CP-divisible at all times.	
	
	\subsection{Other properties}
	
	In general, as discussed, the dynamics induced by $\textbf{H}$ on states with a number of excitations \textit{larger} than one is not solvable. For instance, the evolution of states with two excitations, like
	\begin{equation}
		\e^{-\i t\textbf{H}}\ket{\psi_\e}\otimes b_j^\dag(\eta)\ket{\vac}\quad\text{or}\quad\e^{-\i t\textbf{H}}\ket{0}\otimes b_j^\dag(\eta)b_{j'}^\dag(\eta')\ket{\vac},
	\end{equation}
	cannot be computed explicitly. This, in particular, would generally prevent us from computing the joint probability distributions associated with the quantum process obtained by repeatedly probing the system via any orthonormal basis: a measurement in a basis which is not compatible with the $\hilb_{\e}\oplus\hilb_\g$ splitting, as we will see, will necessarily involve taking into account higher-excitation sectors.
	
	However, \textit{specifically} in the flat coupling case, a fundamental property holds:
	\begin{proposition}   \label{PRO-A} For $\xi_j(t,\omega)$ defined in Eq.~\eqref{xi-A}, one has
		\begin{equation}
			U_{\tau} \oper_S \otimes b^\dagger_j(\xi_j(t)) U^\dagger_{\tau} = \oper_S \otimes b^\dagger_j(\xi_j(\tau,t)) ,
		\end{equation}
		for any $\tau \geq 0$, where we define
		\begin{equation}
			\xi_j(\tau,t,\omega) :=\e^{-\i\omega \tau}\xi_j(t,\omega).
		\end{equation}
	\end{proposition}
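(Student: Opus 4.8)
The plan is to promote the claimed operator identity to a first-order differential equation in the parameter $\tau$ and to conclude by uniqueness of the Heisenberg flow. Set $U_\tau=\e^{-\i\tau\mathbf{H}}$ and let $A(\tau):=U_\tau\big(\oper_{\rm S}\otimes b^\dagger_j(\xi_j(t))\big)U^\dagger_\tau$; differentiating yields the Heisenberg equation $\dot{A}(\tau)=-\i[\mathbf{H},A(\tau)]$ with $A(0)=\oper_{\rm S}\otimes b^\dagger_j(\xi_j(t))$. On the other hand, the candidate right-hand side $B(\tau):=\oper_{\rm S}\otimes b^\dagger_j(\xi_j(\tau,t))$, with $\xi_j(\tau,t,\omega)=\e^{-\i\omega\tau}\xi_j(t,\omega)$, obeys $\partial_\tau\xi_j(\tau,t,\omega)=-\i\omega\,\xi_j(\tau,t,\omega)$; since the free field term $\mathbf{H}_{\rm B}:=\oper_{\rm S}\otimes\sum_k\int\mathrm{d}\omega\,\omega\,b^\dagger_k(\omega)b_k(\omega)$ satisfies $[\mathbf{H}_{\rm B},b^\dagger_j(g)]=b^\dagger_j(\omega g)$, one gets $\dot{B}(\tau)=-\i[\mathbf{H}_{\rm B},B(\tau)]$. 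Because $A(0)=B(0)$, it will suffice to verify that $B(\tau)$ solves the \emph{same} equation as $A(\tau)$, i.e.\ that $[\mathbf{H}-\mathbf{H}_{\rm B},B(\tau)]=0$ for $\tau>0$; the sought identity $A(\tau)=B(\tau)$ then follows at once by uniqueness of the solution of this linear equation, the value $\tau=0$ being trivial.

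The next step is to evaluate $[\mathbf{H}-\mathbf{H}_{\rm B},B(\tau)]=[H_\e\otimes\oper_{\rm B}+H_{\rm int},B(\tau)]$ directly from the canonical commutation relations. The atomic free part contributes nothing, $[H_\e\otimes\oper_{\rm B},\oper_{\rm S}\otimes b^\dagger_j(\cdot)]=0$, since $H_\e$ commutes with $\oper_{\rm S}$ and $b^\dagger_j$ commutes with $\oper_{\rm B}$. Within $H_{\rm int}$ the creation part commutes with $b^\dagger_j(\xi_j(\tau,t))$, whereas the annihilation part, through $[b_k(\omega),b^\dagger_j(g)]=\delta_{kj}\,g(\omega)$, produces a single source term $c_j(\tau)\,|\e_j\>\!\<0|\otimes\oper_{\rm B}$, with the scalar $c_j(\tau):=\int\mathrm{d}\omega\;\cc{f_j(\omega)}\,\xi_j(\tau,t,\omega)$. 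Hence the entire obstruction to $A(\tau)=B(\tau)$ collapses to showing that $c_j(\tau)=0$ for all $\tau>0$.

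Finally, I would establish this vanishing precisely in the flat-coupling limit~\eqref{eq:flat}. Using $f_j(\omega)=\sqrt{\gamma_j/2\pi}$ together with $\xi_j(\tau,t,\omega)=\e^{-\i\omega\tau}\xi_j(t,\omega)$, the scalar becomes proportional to $\int\mathrm{d}\omega\;\e^{-\i\omega\tau}\xi_j(t,\omega)=\hat{\xi}_j(t,-\tau)$, i.e.\ the Fourier antitransform of $\xi_j(t,\cdot)$ evaluated at $x=-\tau$. By Eq.~\eqref{eq:propa} the function $\hat{\xi}_j(t,x)$ is supported on $[0,t]$, so $\hat{\xi}_j(t,-\tau)=0$ for every $\tau>0$; therefore $c_j(\tau)=0$ and $[\mathbf{H}-\mathbf{H}_{\rm B},B(\tau)]=0$, which closes the argument. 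Physically this is a statement of causality: the Dirac-delta interaction couples the field only at the atom's position $x=0$, while the emitted wavepacket, supported on $x\ge0$ and travelling rightwards at unit speed, has vanishing amplitude there at all later times and never re-interacts with the atom.

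The crux of the proof, and its only genuinely nontrivial ingredient, is the vanishing of $c_j(\tau)$: away from the flat limit the emitted wavefunction fails to be compactly supported, $c_j(\tau)$ is generically nonzero, and the clean intertwining $U_\tau b^\dagger_j(\xi_j(t))U^\dagger_\tau=b^\dagger_j(\e^{-\i\omega\tau}\xi_j(t))$ breaks down---consistently with the statement being asserted only in this limit. A subordinate, technical caveat is that $b^\dagger_j(g)$ is unbounded, so the differentiation of $A(\tau)$ and the uniqueness step are to be read on a suitable invariant dense domain; in practice one may test the identity on finite-particle states generated from $\ket{\vac}$, where every integral and commutator above is manifestly well defined.
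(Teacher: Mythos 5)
Your proposal is correct and follows essentially the same route as the paper's Appendix~\ref{app:1}: there, too, the identity is recast as a Heisenberg equation in $\tau$, the ansatz $\oper_{\rm S}\otimes b_j^\dagger(\e^{-\i\omega\tau}\xi_j(t))$ is shown to satisfy it because the only obstruction from $H_{\rm int}$ is the scalar $\int\mathrm{d}\omega\,\cc{f_j(\omega)}\,\e^{-\i\omega\tau}\xi_j(t,\omega)\propto\hat{\xi}_j(t,-\tau)$, which vanishes for $\tau>0$ by the support property~\eqref{eq:propa} of the flat-coupling limit, and one concludes by uniqueness. The only cosmetic difference is that the paper packages this as a standalone lemma for a general wavefunction $\eta$ with $\hat{\eta}$ vanishing on $[-T,0]$, while you argue directly for $\xi_j(t)$; your explicit remarks on the $\tau=0$ edge case and on domain issues are refinements the paper leaves implicit.
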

	The proof of Prop.~\ref{PRO-A} is reported in Appendix~\ref{app:1}. This property will be used in the following section to compute the multitime statistics associated with the process both with respect to measurements in bases compatible with the $\hilb_{\e}\oplus\hilb_\g$ splitting (cf.\ Section~\ref{subsec:compatible}) and arbitrary bases (cf.\ Section~\ref{subsec:arbitrary}).
	
	\section{Markovianity and classicality}\label{sec:marklass}
	
	The limiting choice of flat couplings in the interaction Hamiltonian~\eqref{hnint} is the only possible choice ensuring that the reduced dynamics, associated with the map $\Lambda_t$, is a GKLS semigroup---historically, such semigroups have long been denoted as \textit{Markovian semigroups}. However, the semigroup property by itself does \textit{not} suffices, in general, to ensure Markovianity of the corresponding process, in the sense of Eq.~\eqref{eq:markov}, with respect to an arbitrary sharp measurement.
	
	In this regard, we shall consider projective measurements w.r.t. an orthonormal basis $\{|x_\alpha\>\}_{x=0,\dots,d}$ in the  Hilbert space $\mathcal{H}_{\rm S}$. We shall assume the initial state of the system+bath to be $|\psi_0\> \otimes |\vac\>$. With this choice, the multitime statistic generated by this process, cf.\ Eq.~\eqref{eq:probquantum}, is
	\begin{equation}\label{I}
		\mathbb{P}_n(x_n,t_n;\ldots ;x_1,t_1) = {\rm Tr} [ \mathcal{P}_{x_n}\otimes \mathcal{I}_B \circ U_{t_n-t_{n-1}} \ldots \circ \mathcal{U}_{t_2-t_1} \circ \mathcal{P}_{x_1}\otimes \mathcal{I}_B \circ U_{t_1}(|\psi_0\>\!\<\psi_0| \otimes |\vac\>\!\<\vac|)] ,
	\end{equation}
	where $\mathcal{P}_x(X) = P_x X P_x$, $P_x=\ketbra{x}{x}$.
	
	In this section we will provide an explicit proof of the fact that, indeed, this quantum process is Markovian in the limit of flat coupling. Furthermore, necessary and sufficient conditions for the classicality of the model---that is, for the Chapman--Kolmogorov equation~\eqref{eq:chapman} to be satisfied---will be found. For the sake of simplicity, we shall start by examining the situation in which the measurements are performed with respect to an orthonormal basis compatible with the splitting $\hilb_{\rm S}=\hilb_\e\oplus\hilb_\g$, and then examine the general situation.
	
	\subsection{Interlude: Markovianity and quantum regression}
	
	Before going on, it will be useful to briefly discuss the link between Markovianity and the validity of the general quantum regression formula (GQRF)~\cite{Lax,Breuer,NM4}. Given any open quantum system and a fixed state of the bath $\rho_{\rm B}$, the unitary evolution of the ``system + bath'' $\mathcal{U}_t \rho = U_t \rho U_t^\dagger$ giving rise to a CP-divisible reduced evolution of the system $\Lambda_t = {\rm Tr}_{\rm B}\,\mathcal{U}_t(\rho \otimes\rho_{\rm B})$, we say that a pair $(\mathcal{U}_t,\rho_{\rm B})$ satisfies the GQRF if, for any collection of times $t_n > t_{n-1} > \ldots > t_1 > 0$ and two sets of system operators $\{X_0,X_1,\ldots,X_n\}$ and  $\{Y_0,Y_1,\ldots,Y_n\}$, one has the following relation between multitime correlation functions:
	\begin{equation}\label{QR0}
		{\rm Tr}[\tilde{\mathcal{E}}_n \mathcal{U}_{t_n-t_{n-1}} \cdots \tilde{\mathcal{E}}_1\,\mathcal{U}_{t_1}(\rho \otimes \rho_{\rm B})] = {\rm Tr}[\mathcal{E}_n \Lambda_{t_n,t_{n-1}} \cdots \mathcal{E}_1 \Lambda_{t_1}(\rho)] ,
	\end{equation}
	where $\mathcal{E}_k=X_k\cdot Y_k$ and $\tilde{\mathcal{E}}_k=\mathcal{E}_k\otimes\mathcal{I}_{\rm B}$. Eq.~\eqref{QR0} means that all correlation functions for the ``system+bath'' evolution can be computed in terms of the dynamical map of the system alone. In particular, in such a case,
	\begin{eqnarray}\label{eq:probquantum_regression}
		\mathbb{P}_n(x_n,t_n;\dots;x_1,t_1)&=&\tr\left[\left(\mathcal{P}_{x_n}\otimes\mathcal{I}_{\rm B}\right)\mathcal{U}_{t_n-t_{n-1}}\cdots\left(\mathcal{P}_{x_1}\otimes\mathcal{I}_{\rm B}\right)\mathcal{U}_{t_1}(\rho\otimes\rho_{\rm B})\right]\nonumber\\
		&=&\tr\left[\mathcal{P}_{x_n}\Lambda_{t_n,t_{n-1}}\cdots\mathcal{P}_{x_1}\Lambda_{t_1}(\rho)\right]\nonumber\\
		&=&\bra{x_1}\Lambda_{t_1}(\rho)\ket{x_1}
		\prod_{j=1}^{n-1}
		\bra{x_{j+1}}
		\Lambda_{t_{j+1},t_{j}}
		\left(
		\ketbra{x_{j}}{x_{j}}
		\right)
		\ket{x_{j+1}},
	\end{eqnarray}
	where the last equality is an identity which simply follows from the explicit definition of the maps $\mathcal{P}_x$. This observation is at the core of the relation between Markovianity and regression, which, while well-known, shall be recalled explicitly for the sake of completeness:
	\begin{proposition}\label{prop:general}
		The following statements are equivalent:
		\begin{itemize}
			\item[(i)] the regression equality~\eqref{eq:probquantum_regression} holds for all $n$;
			\item [(ii)] the process is Markovian, and Eq.~\eqref{eq:probquantum_regression} holds for $n=1,2$;
			\item [(iii)] the process is Markovian, and satisfies
			\begin{eqnarray}
				\mathbb{P}_1(x_1,t_1)&=&\bra{x_1}\Lambda_{t_1}(\rho)\ket{x_1};\\
				\mathbb{P}(x_2,t_2|x_1,t_1)&=&\bra{x_2}\Lambda_{t_2,t_1}(\ketbra{x_1}{x_1})\ket{x_2}.
			\end{eqnarray}
		\end{itemize}
	\end{proposition}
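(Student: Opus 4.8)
The plan is to prove the chain of equivalences by treating (ii) $\Leftrightarrow$ (iii) as an immediate consequence of the definition of conditional probability, and by establishing the substantive equivalence (i) $\Leftrightarrow$ (ii) through the observation that the right-hand side of Eq.~\eqref{eq:probquantum_regression} is \emph{already} of the Markov product form~\eqref{eq:reconstruct}. Concretely, I would set $\mathbb{P}_1(x_1,t_1):=\bra{x_1}\Lambda_{t_1}(\rho)\ket{x_1}$ and introduce the candidate transition weights $T(x',t'|x,t):=\bra{x'}\Lambda_{t',t}(\ketbra{x}{x})\ket{x'}$, so that Eq.~\eqref{eq:probquantum_regression} reads $\mathbb{P}_n=\mathbb{P}_1(x_1,t_1)\prod_{j=1}^{n-1}T(x_{j+1},t_{j+1}|x_j,t_j)$. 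The first ingredient I would record is that, by CP-divisibility, each propagator $\Lambda_{t',t}$ is trace-preserving, whence $\sum_{x'}T(x',t'|x,t)=\tr[\Lambda_{t',t}(\ketbra{x}{x})]=1$; this makes the $T$'s genuine (normalized) transition probabilities.

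For (i) $\Rightarrow$ (ii) I would assume Eq.~\eqref{eq:probquantum_regression} for all $n$ and compute the conditional probability $\mathbb{P}(x_n,t_n|x_{n-1},t_{n-1};\ldots;x_1,t_1)=\mathbb{P}_n/\mathbb{P}_{n-1}$. Since both numerator and denominator are given by the product form, all factors except $T(x_n,t_n|x_{n-1},t_{n-1})$ cancel, leaving a quantity that depends only on the most recent outcome $(x_{n-1},t_{n-1})$ (and on $(x_n,t_n)$); this is exactly the Markov property~\eqref{eq:markov}. The normalization just recorded guarantees that marginalizing the hierarchy over its most recent index is consistent, so that $\mathbb{P}_{n-1}$ is indeed the genuine marginal. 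Statement (ii) then follows, since regression for $n=1,2$ is a particular case of regression for all $n$.

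The equivalence (ii) $\Leftrightarrow$ (iii) needs no structural input: the $n=1$ case of Eq.~\eqref{eq:probquantum_regression} is literally the first identity of (iii), while dividing the $n=2$ case by $\mathbb{P}_1(x_1,t_1)$ produces the second; conversely, the always-valid relation $\mathbb{P}_2=\mathbb{P}(x_2,t_2|x_1,t_1)\,\mathbb{P}_1(x_1,t_1)$ turns the two identities of (iii) back into the $n=1,2$ regression formulas (working, as usual, at points where the conditioning probabilities do not vanish). For the remaining implication (ii) $\Rightarrow$ (i) I would invoke Markovianity to expand the whole hierarchy through the reconstruction formula~\eqref{eq:reconstruct}, and then use that the two-time identity of (iii) holds for \emph{arbitrary} time pairs $t_{j+1}\geq t_j$ to fix each transition probability as $\mathbb{P}(x_{j+1},t_{j+1}|x_j,t_j)=T(x_{j+1},t_{j+1}|x_j,t_j)$; substituting these together with $\mathbb{P}_1(x_1,t_1)=\bra{x_1}\Lambda_{t_1}(\rho)\ket{x_1}$ recovers Eq.~\eqref{eq:probquantum_regression} for every $n$.

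I expect the only genuine subtlety to lie in (i) $\Rightarrow$ (ii): one must make sure that the factors appearing in the regression product are \emph{bona fide} conditional probabilities and not merely algebraic factors, i.e.\ that marginalizing the multitime distribution over its last argument reproduces the $(n-1)$-point distribution with the same factors. This is precisely where trace-preservation of the propagators --- hence the standing CP-divisibility assumption --- is used, and it is the point at which care is required; the remaining manipulations are purely formal rearrangements of products and quotients.
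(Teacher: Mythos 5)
Your proposal is correct and follows essentially the same route as the paper: (ii)$\iff$(iii) read off from the definition of conditional probability, (i)$\implies$(ii) by cancellation of the product form in the ratio $\mathbb{P}_n/\mathbb{P}_{n-1}$, and (ii)$\implies$(i) by combining the Markov reconstruction formula~\eqref{eq:reconstruct} with the $n=1,2$ regression identities. Your added remark on trace preservation of the propagators (ensuring the factors are normalized transition probabilities) is a sound explicit justification of a step the paper leaves implicit, but it does not change the structure of the argument.
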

	\begin{proof}
		The equivalence (ii)$\iff$(iii) is an immediate consequence of Eq.~\eqref{eq:probquantum_regression}. (i)$\implies$(ii) is an immediate consequence of the fact that, if (i) holds, then the full family of joint probabilities is given by Eq.~\eqref{eq:probquantum_regression} and thus
		\begin{equation}
			\mathbb{P}(x_n,t_n|x_{n-1},t_{n-1};\ldots;x_1,t_1)=\bra{x_n}\Lambda_{t_n,t_{n-1}}\left(\ketbra{x_{n-1}}{x_{n-1}}\right)\ket{x_n}=\mathbb{P}(x_n,t_n|x_{n-1},t_{n-1}).
		\end{equation}
		Vice versa, suppose that (ii) is true. Since the process is Markovian, the family of joint probabilities can be entirely reconstructed as such:
		\begin{equation}\label{eq:reconstruct1}
			\mathbb{P}_n(x_n,t_n;\ldots;x_1,t_1)=\mathbb{P}(x_n,t_n|x_{n-1},t_{n-1})\cdots\mathbb{P}(x_2,t_2|x_1,t_1)\mathbb{P}_1(x_1,t_1).
		\end{equation}
		By Eq.~\eqref{eq:reconstruct1} and the fact that Eq.~\eqref{eq:probquantum_regression} holds for $n=1,2$, one easily concludes that the latter equality holds for all $n$.
	\end{proof}
	
	This shows the link between the validity of the regression formula \textit{for a fixed orthonormal basis}, and the Markovianity in the same basis: $n$-point regression implies Markovianity, and Markovianity plus $1$-point and $2$-point regression implies $n$-point regression.
		
	\subsection{
		\texorpdfstring{Measurements in bases compatible with the $\mathcal{H}_\g \oplus \mathcal{H}_\e$ splitting}{Measurements in the basis compatible with the orthogonal splitting}\label{subsec:compatible}
	}
	
	Consider a projective measurements w.r.t. an orthonormal basis \textit{compatible} with the splitting of the Hilbert space $\hilb_{\rm S}$ of the system into its excited and ground sector, that is, a basis $\{\ket{x}\}_{x=0,\dots,d}$ such that $\ket{0}$ is the ground state of the atom and $\{\ket{x}\}_{x=1,\dots,d}\subset\hilb_{\e}$ is any orthonormal basis of the excited sector. Let us compute the corresponding $n$-point joint probability. Assuming the following initial state $|\Psi_0\> = |\psi\> \otimes \vac$, with $|\psi\> = \alpha |0\> \oplus |\psi_\e\>$, and using hereafter the compact notation
	\begin{equation}
		b^\dag_j\left(\eta\right)=\int\mathrm{d}\omega\;\eta(\omega)b_j^\dag(\omega)
	\end{equation}
	for an arbitrary wavefunction $\eta$, one finds, as discussed in Section~\ref{sec:flat},
	\begin{equation}\label{Psi-t}
		U_{t_1}|\Psi_0\> = \Big[ \alpha |0\> \otimes |\vac\> + \A(t_1)|\psi_\e\> \Big] \otimes \ket{\vac} + |0\> \otimes
		\sum_{j=1}^d  b_j^\dagger(\xi_j(t_1))|\vac\>  ,
	\end{equation}
	with $\A(t)$ as in Eq.~\eqref{eq:exp} and $\xi_j(t)$ as in Eq.~\eqref{xi-A}; hence,
	\begin{eqnarray}
		P_{x_1} \otimes \oper_B\, U_{t_1}|\Psi_0\> &=&  \<x_1|\A(t_1)|\psi_\e\>   \, |x_1\> \otimes |\vac\> , \\
		P_0 \otimes \oper_B\, U_{t_1}|\Psi_0\> &=&   |0\> \otimes \Big(\alpha |\vac\> +  \sum_{j=1}^{d}   b_j^\dagger(\xi_j(t_1))|\vac\>\Big) .
	\end{eqnarray}
	Therefore, we have	
	\begin{equation}\label{eq:p1}
		\mathbb{P}_1(x_1,t) = | \<x_1|\A(t)|\psi_\e\> |^2 \ , \ \ \  \mathbb{P}_1(0,t) = |\alpha|^2 +  \sum_{j=1}^{d}  \|\xi_j(t)\|^2 .
	\end{equation}
	Now we must apply the second unitary evolution $U_{\Delta t_2}$. We have
	\begin{eqnarray}
		U_{\Dt_2} P_{x_1} \otimes \oper_B\, U_{t_1}|\Psi_0\> &=&\bra{x_1}\A(t_1)\ket{x_1}\,U_{\Dt_2}\left(\ket{x_1}\otimes\ket{\vac}\right)  \\
		&=&  \<x_1|\A(t)|\psi_\e\>  \Big( \A(\Dt_2) |x_1\> \otimes |\vac\> + |0\> \otimes
		\sum_{j=1}^d  b_j^\dagger(\eta^{(1)}_j(\Dt_2))|\vac\> \Big) \nonumber ,
	\end{eqnarray}
	where $\Dt_k = t_k - t_{k-1}$, with the function $\eta^{(\alpha)}_j(\tau)$ being defined as follows:
	\begin{equation}\label{eta}
		\eta^{(\alpha)}_j(\tau,\omega) = -i\int_0^\tau \mathrm{d}s\;\e^{-\i\omega(\tau-s)}  f_j(\omega) \< \e_j|\A(s)|x_\alpha\>  ,
	\end{equation}
	and we used again the explicit form of the evolution in the single-excitation sector. Besides, using Prop.~\ref{PRO-A},
	\begin{eqnarray}\label{x}
		U_{\Dt_2} P_0 \otimes \oper_B\, U_{t_1}|\Psi_0\> &=&U_{\Dt_2}\left[ |0\> \otimes \Big(\alpha |\vac\> +  \sum_{j=1}^{d}   b_j^\dagger(\xi_j(t_1))|\vac\>\Big) \right]\nonumber\\&=& |0\> \otimes \Big(\alpha |\vac\> +  \sum_{j=1}^{d}  b_j^\dagger(\xi_j(\Dt_2,t_1))|\vac\>\Big)  ,
	\end{eqnarray}
	where $	\xi_j(\tau,t,\omega) = \e^{-\i\omega \tau} \xi_j(t,\omega)$. Notice that, differently from all previous steps, this is the \textit{only} step that holds specifically for the flat coupling case.
	
	The second projective measurement gives rise to	
	\begin{eqnarray}
		P_{x_2} \otimes \oper_B\,  U_{\Dt_2} P_{x_1} \otimes \oper_B\, U_{t_1}|\Psi_0\> &=& \<x_2|\A(\Dt_2)|x_1\>\!\<x_1|\A(t_1)|\psi_\e\> \, |x_2\> \otimes |\vac\> , \\
		P_{0} \otimes \oper_B\,  U_{\Dt_2} P_{x_1} \otimes \oper_B\, U_{t_1}|\Psi_0\> &=&  \<x_1|\A(t_1)|\psi_\e\>  \, |0\> \otimes \sum_{j=1}^d  b_j^\dagger(\eta^{(1)}(\Dt_2))|\vac\> , \\
		P_{x_2} \otimes \oper_B\,  U_{\Dt_2} P_{0} \otimes \oper_B\, U_{t_1}|\Psi_0\> &=& 0 , \\
		P_{0} \otimes \oper_B\,  U_{\Dt_2} P_{0} \otimes \oper_B\, U_{t_1}|\Psi_0\> &=& |0\> \otimes \Big(\alpha |\vac\> +  \sum_{k=1}^{d}  b_j^\dagger(\xi_j(\Dt_2,t_1))|\vac\>\Big) .
	\end{eqnarray}
	One finds
	\begin{eqnarray}
		\mathbb{P}_2(x_2,t_2;x_1,t_1)    &=& |\<x_2|\A(\Dt_2)|x_1\>|^2 \, \mathbb{P}_1(x_1,t_1) \\
		\mathbb{P}_2(0,t_2;x_1,t_1)   &=& \| \eta^{(1)}_{x_1}(\Dt_2)\|^2 \, \mathbb{P}_1(x_1,t_1) \\
		\mathbb{P}_2(x_2,t_2;0,t_1)  &=& 0 , \\
		\mathbb{P}_2(0,t_2;0,t_1)  &=&  \mathbb{P}_1(0,t_1) .
	\end{eqnarray}
	It is straightforward to generalize the above procedure for $n$-point joint probabilities:\small
	\begin{eqnarray}
		\mathbb{P}_n(x_n,t_n;x_{n-1},t_{n-1};\ldots ;x_1,t_1) &=& |\<x_{n}|\A(\Dt_n)|x_{n-1}\>|^2 \ldots |\<x_2|\A(\Dt_2)|x_1\>|^2 \, \mathbb{P}_1(x_1,t_1) , \label{P1} \\
		\mathbb{P}_n(0,t_n;\ldots;0,t_{k+1};x_{k},t_{k};\ldots ;x_1,t_1) &=& \|\eta^{(k)}_{x_k}(\Dt_{k+1})\|^2 \,  \mathbb{P}_k(x_k,t_k;x_{k-1},t_{k-1};\ldots ;x_1,t_1) , \label{P2}  \\
		\mathbb{P}_n(0,t_n;0,t_{n-1};\ldots ;0,t_1) &=& \mathbb{P}_1(0,t_1) , \label{P3}
	\end{eqnarray}\normalsize
	and all remaining probabilities vanish. Importantly, all joint probabilities associated to phenomena in which the system is \textit{first} measured in its ground state ($x_j=0$) and \textit{later} in an excited state ($x_{j+1}\neq0$) vanish: physically, because of our particular choice of form factors, emitted photons cannot be reabsorbed---the decay of the system is irreversible.
	
	\subsubsection{Markovianity}
	Having computed the full family of joint probabilities associated with the system, we can now investigate its Markovianity. A direct scrutiny of Eqs.~\eqref{P1}--\eqref{P3} shows that the process is indeed Markovian:
	
	\begin{proposition}
		Let $\textbf{H}$ be a multilevel spin--boson model in the limit of flat coupling, $\ket{\psi_0}=\ket{\psi_\e}+\alpha\ket{0}\in\hilb_{\rm S}$, and $\{\ket{x}\}_{x=1,\dots,d}\subset\hilb_{\rm S}$ any orthonormal basis of the excited sector $\hilb_{\e}$. Then the process defined by Eq.~\eqref{I} is a Markovian process with $1$-time probability $\mathbb{P}_1(x,t)$ and transition probabilities $\mathbb{P}(x,t|y,s)$ given as follows: for any $x=1,\dots,d$ and $t\geq s\geq 0$,
		\begin{eqnarray}\label{eq:p1-1}
			\mathbb{P}_1(x,t)&=&\left|\braket{x|\A(t)|\psi_\e}\right|^2,\\\label{eq:p1-2}
			\mathbb{P}_1(0,t)&=&1-\sum_{x=1}^d\left|\braket{x|\A(t)|\psi_\e}\right|^2
		\end{eqnarray}
		and
		\begin{eqnarray}
			\mathbb{P}(x,t|y,s)  &=&   | \<x|\A(t-s)|y\> |^2  , \label{p11} \\
			\mathbb{P}(0,t|y,s)  &=&  1-\sum_{x=1}^d | \<x|\A(t-s)|y\> |^2\label{p01}, \\
			\mathbb{P}(x,t|0,s)  &=& 0 , \label{p10} \\
			\mathbb{P}(0,t|0,s)  &=& 1 . \label{p00}
		\end{eqnarray}
		Finally, the process satisfies the regression equality~\eqref{eq:probquantum_regression}.
	\end{proposition}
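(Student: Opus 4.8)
The plan is to treat the statement as a bookkeeping exercise built on the explicit joint probabilities~\eqref{P1}--\eqref{P3} already derived above, rather than as a fresh computation. Since those equations give \emph{all} nonvanishing $n$-time distributions in a manifestly product form, the three assertions of the proposition---the single-time probabilities, the transition probabilities, and Markovianity together with the regression equality---can each be extracted by elementary manipulations. I would organize everything around the observation that each factor appearing in~\eqref{P1}--\eqref{P3} depends on at most two consecutive outcomes and on the corresponding increment $\Dt_k=t_k-t_{k-1}$.

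First I would record the single-time probabilities. Equation~\eqref{eq:p1} already yields~\eqref{eq:p1-1}, and for the ground-state outcome I would rewrite $\mathbb{P}_1(0,t)=|\alpha|^2+\sum_j\|\xi_j(t)\|^2$ using the norm-conservation identity $\|\A(t)|\psi_\e\>\|^2+\sum_j\|\xi_j(t)\|^2=\||\psi_\e\>\|^2$, Parseval's identity $\sum_{x=1}^d|\<x|\A(t)|\psi_\e\>|^2=\|\A(t)|\psi_\e\>\|^2$ (valid because $\A(t)|\psi_\e\>\in\hilb_\e$ and $\{|x\>\}_{x=1}^d$ is an orthonormal basis of $\hilb_\e$), and the normalization $|\alpha|^2+\||\psi_\e\>\|^2=1$; this reproduces~\eqref{eq:p1-2}. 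The transition probabilities then follow from the ratios $\mathbb{P}_2/\mathbb{P}_1$: identity~\eqref{p11} is read off directly from~\eqref{P1} at $n=2$, while~\eqref{p10} and~\eqref{p00} encode the irreversibility of the decay noted in the discussion following~\eqref{P3} (once the ground state is recorded, no excited outcome can follow). The only genuinely nontrivial identification is~\eqref{p01}: here I would apply the same norm-conservation argument to the initial excited state $|y\>$, showing that the bath-population factor $\|\eta^{(1)}_y(t-s)\|^2$ from~\eqref{P2} equals $1-\|\A(t-s)|y\>\|^2=1-\sum_{x=1}^d|\<x|\A(t-s)|y\>|^2$.

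Next I would establish Markovianity. Dividing~\eqref{P1}--\eqref{P3} by the appropriate $(n-1)$-time marginal, each conditional $\mathbb{P}(x_n,t_n\,|\,x_{n-1},t_{n-1};\dots;x_1,t_1)$ collapses to the single factor carrying the indices $x_{n-1},x_n$ and the increment $\Dt_n$, that is, to $\mathbb{P}(x_n,t_n|x_{n-1},t_{n-1})$ as given by~\eqref{p11}--\eqref{p00}. The case analysis must respect the block structure of each history: one checks separately the situation in which the last outcome is excited (governed by~\eqref{P1}), the situation in which the system has already decayed (governed by~\eqref{P2}--\eqref{P3}), and the forbidden ground-to-excited transitions. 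The point requiring a little care is that conditionals are only constrained on histories of positive probability, so the vanishing entries in~\eqref{P2}--\eqref{P3} are automatically consistent with~\eqref{p10}. This simultaneously delivers the reconstruction formula~\eqref{eq:reconstruct}.

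Finally, for the regression equality I would invoke Proposition~\ref{prop:general}: since Markovianity is in hand, it suffices to verify the one- and two-point regression identities. Using the block form~\eqref{MAD} of $\Lambda_t$ with $\hat\rho_\e=|\psi_\e\>\!\<\psi_\e|$, the diagonal entries $\<x|\Lambda_t(\rho)|x\>$ reproduce~\eqref{eq:p1-1}--\eqref{eq:p1-2}; and since the flat-coupling reduced dynamics is a semigroup with propagator $\A(t,s)=\A(t-s)$, the entries $\<x|\Lambda_{t,s}(|y\>\!\<y|)|x\>$ reproduce~\eqref{p11}--\eqref{p00}. I expect no deep obstacle: the whole difficulty is the organizational one of handling the ground/excited case split uniformly, the single substantive input being the norm-conservation identity that converts the bath populations into the survival-amplitude expression appearing in~\eqref{p01}.
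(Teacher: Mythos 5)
Your proposal is correct and follows essentially the same route as the paper's proof: extracting the single-time and transition probabilities from Eq.~\eqref{eq:p1} and Eqs.~\eqref{P1}--\eqref{P3} via the unitarity (norm-conservation) identity, reading off Markovianity from the product structure of the joint distributions, and obtaining the regression equality by checking $\mathbb{P}(x,t|y,s)=\<x|\Lambda_{t,s}(|y\>\!\<y|)|x\>$ and invoking Prop.~\ref{prop:general}. Your version merely spells out details (Parseval, the positive-probability caveat for conditionals, the semigroup identification $\A(t,s)=\A(t-s)$) that the paper leaves implicit.
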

	\begin{proof}
		Eqs.~\eqref{eq:p1-1}--\eqref{eq:p1-2} simply coincide with Eq.~\eqref{eq:p1} taking into account the unitarity of the evolution induced by $\textbf{H}$; similarly, Eqs.~\eqref{p11}--\eqref{p00} follow from Eq.~\eqref{P1}--\eqref{P3} and, again, the unitarity of the evolution. Moreover, recalling the definition of the reduced dynamics $\Lambda_t$ and the corresponding propagator, we indeed have, for all $x,y=0,1,\dots,d$,
		\begin{equation}
			\mathbb{P}(x,t|y,s) =\< x| \Lambda_{t,s}(|y\>\!\<y|)| x\>  ,
		\end{equation}
		whence the final claim immediately follows from Prop.~\ref{prop:general}.
	\end{proof}
	
	It is important to notice that these properties hold regardless of whether the orthonormal basis $\{\ket{x}\}_{x=1,\dots,d}\subset\hilb_{\e}$ of the excited sector coincides with the ``natural" basis $\{\ket{\e_x}\}_{x=1,\dots,d}$ in which the operator $\Gamma$ is diagonal---what is really important is the fact that, choosing an orthonormal basis of the excited sector as our measurement basis, we are forced to "complete" it with the ground state $\ket{0}$ of the system, thus obtaining an orthonormal basis of $\hilb_{\rm S}$ which is compatible with the excited-ground splitting. The interpretation of Eqs.~\eqref{p11}--\eqref{p00} is also immediate: if the system is found in its ground state $\ket{0}$ after some measurement, then it will be found in such a state in all future measurements. This is ultimately a consequence of the fact that, with our choice of measurement basis, the global state of the system+bath is never going to leave the excited sector. Also notice that, by Eq.~\eqref{p11}, the transition probability $\mathbb{P}(x,t|y,s)$ for $x,y=1,\dots,d$ (that is, the probability associated with a transition $\ket{y}\rightarrow\ket{x}$ between two states of the excited sector) simply coincides with the squared modulus of the matrix element of $\A(t-s)$ between such states.
	
	\subsubsection{Classicality}
	Let us now examine whether the process is classical, i.e., whether it satisfies the consistency conditions~\eqref{eq:consistency}; since the process is Markovian, this happens if and only if the Chapman--Kolmogorov equations~\eqref{eq:chapman0}--\eqref{eq:chapman} for the transition probability is satisfies by the conditional probabilities~\eqref{p11}--\eqref{p00}. For the multilevel spin--boson model it means that for all $x,y=1,\dots,d$ and $t\geq r\geq s\geq 0$,
	\begin{eqnarray}
		\label{eq:ck-xy}	\mathbb{P}(x,t|y,s)&=&\sum_{z=0}^d\mathbb{P}(x,t|z,r)\mathbb{P}(z,r|y,s);\\
		\label{eq:ck-0y}	\mathbb{P}(0,t|y,s)&=&\sum_{z=0}^d\mathbb{P}(0,t|z,r)\mathbb{P}(z,r|y,s);\\
		\label{eq:ck-x0}	\mathbb{P}(x,t|0,s)&=&\sum_{z=0}^d\mathbb{P}(x,t|z,r)\mathbb{P}(z,r|0,s);\\
		\label{eq:ck-00}	\mathbb{P}(0,t|0,s)&=&\sum_{z=0}^d\mathbb{P}(0,t|z,r)\mathbb{P}(z,r|0,s).
	\end{eqnarray}
	By Eqs.~\eqref{p11}--\eqref{p00}, Eqs.~\eqref{eq:ck-x0}--\eqref{eq:ck-00} reduce to the identities $0=0$ and $1=1$, but Eqs.~\eqref{eq:ck-xy}--\eqref{eq:ck-0y} are not trivial and are not generally satisfied.
	
	As it turns out, classicality \textit{does} depend on the choice of measurement basis.	
	
	\begin{proposition}  \label{PRO-D}
		Consider the same process as before with the additional condition $[\H_\e,\Gamma]=0$. Then:
		\begin{itemize}
			\item if $\ket{x}=\ket{\e_x}$ for all $x=1,\dots,d$, then the process is classical;
			\item suppose that the spectrum of $\Gamma$ is nondegenerate, i.e. $\gamma_j\neq\gamma_\ell$ for $j\neq\ell$. Then the process is classical \textit{if and only if} the chosen measurement basis is such that, for all $x=1,\dots,d$, $\ket{x}=\ket{\e_x}$ possibly up to a phase shift.
		\end{itemize}
	\end{proposition}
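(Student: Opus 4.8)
The plan is to treat the two bullets separately, the first being a direct computation and the second resting on a semigroup argument for the classical transition matrix. For the first bullet I would start from the diagonal form~\eqref{eq:at-diag} of the survival amplitude operator, available because $[\H_\e,\Gamma]=0$. Choosing the measurement basis $\ket{x}=\ket{\e_x}$ makes $\A(t-s)$ diagonal, so that the only surviving excited transition probabilities~\eqref{p11} are $\mathbb{P}(x,t|y,s)=\e^{-(t-s)\gamma_x}\delta_{xy}$, supplemented by the decay rules~\eqref{p01}--\eqref{p00}. The Chapman--Kolmogorov equations~\eqref{eq:ck-xy}--\eqref{eq:ck-00} then collapse onto the multiplicativity of scalar exponentials $\e^{-(t-r)\gamma_x}\e^{-(r-s)\gamma_x}=\e^{-(t-s)\gamma_x}$, which is an identity, while the ground-state equations are enforced by probability conservation. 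Since only the moduli $|\langle x|\A(t-s)|y\rangle|^2$ enter, an arbitrary phase attached to each $\ket{\e_x}$ is immaterial, and classicality survives up to phases.

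For the second bullet the ``if'' part is precisely this computation, so the substance is the \emph{only if} direction. Here I would package the excited-sector statistics into the $d\times d$ matrix $T(\tau)$ with entries $T_{xy}(\tau)=|\langle x|\A(\tau)|y\rangle|^2$. Because $\mathbb{P}(x,t|0,s)=0$, the intermediate sum in~\eqref{eq:ck-xy} reduces to excited indices alone, and~\eqref{eq:ck-xy} becomes exactly the semigroup identity $T(\sigma)T(\mu)=T(\sigma+\mu)$ with $\sigma=t-r$, $\mu=r-s$; equations~\eqref{eq:ck-x0}--\eqref{eq:ck-00} are trivial and~\eqref{eq:ck-0y} follows from trace preservation. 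Thus classicality is equivalent to $\{T(\tau)\}_{\tau\geq0}$ being a smooth one-parameter semigroup with $T(0)=\oper_\e$, the smoothness being inherited from the analyticity of $\A(\tau)=\e^{-\tau(\i\H_\e+\frac12\Gamma)}$ in~\eqref{eq:exp}.

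The crux is then a short-time expansion. Writing $\langle x|\A(\mu)|y\rangle=\delta_{xy}-\mu\langle x|(\i\H_\e+\tfrac12\Gamma)|y\rangle+O(\mu^2)$, one sees that the off-diagonal entries of $T(\mu)$ are $O(\mu^2)$, while the diagonal ones read $1-\mu\langle x|\Gamma|x\rangle+O(\mu^2)$. Hence the generator $D:=\dot T(0)$ is \emph{diagonal}, with entries $-\langle x|\Gamma|x\rangle$. Differentiating the semigroup law at $\mu=0$ yields $\dot T(\sigma)=T(\sigma)D$ with $T(0)=\oper_\e$, whose unique solution is $T(\sigma)=\e^{\sigma D}$, still diagonal. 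Therefore $|\langle x|\A(\sigma)|y\rangle|^2=0$ for all $x\neq y$ and all $\sigma\geq0$, i.e. $\A(\sigma)$---equivalently its generator $L=-\i\H_\e-\tfrac12\Gamma$---is diagonal in the measurement basis $\{\ket{x}\}$.

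To conclude I would invoke nondegeneracy. Since $[\H_\e,\Gamma]=0$, the operator $L$ has eigenvalues $-\i\omega_\ell-\tfrac12\gamma_\ell$, which are pairwise distinct exactly because the $\gamma_\ell$ are; thus $L$ has simple spectrum and a unique eigenbasis up to phases, namely $\{\ket{\e_\ell}\}$. Diagonality of $L$ in $\{\ket{x}\}$ then forces each $\ket{x}$ to coincide, up to a phase, with some $\ket{\e_\ell}$; as both families are orthonormal bases this is a bijection, and relabelling gives $\ket{x}=\ket{\e_x}$ up to a phase. I expect the main obstacle to be a deceptive one: resisting the temptation to expand~\eqref{eq:ck-xy} directly into a quadruple sum over the spectral decomposition of $\A$. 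The decisive shortcut is the remark that coherences feed into populations only at second order in time, so $\dot T(0)$ is diagonal, which single-handedly collapses the putative classical semigroup onto the diagonal; the nondegeneracy of $\Gamma$ then enters cleanly at the very end, being precisely what makes the eigenbasis of $L$ unique---without it, any basis inside a degenerate eigenspace would equally serve.
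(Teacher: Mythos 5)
Your proof is correct, but the decisive step takes a genuinely different route from the paper's. For the ``only if'' direction, the paper works entirely at the level of the probabilities: it inserts the spectral decomposition $\mathbb{P}(x,t|y,s)=\sum_j\e^{-\gamma_j(t-s)}|\braket{x|\e_j}|^2|\braket{y|\e_j}|^2$ into the Chapman--Kolmogorov equation~\eqref{eq:ck-xy} with $x=y$, uses the linear independence of the exponentials $\tau\mapsto\e^{-\gamma_j\tau}$ (this is where nondegeneracy enters, and it enters twice) to reduce classicality to the algebraic constraint $\sum_z|\braket{z|\e_j}|^4=1$ for every $j$, and then concludes from the normalization $\sum_z|\braket{z|\e_j}|^2=1$ that each $\ket{\e_j}$ coincides with some $\ket{z}$ up to a phase. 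You instead recast classicality as the one-parameter semigroup property of the matrix $T(\tau)=\left[\,|\braket{x|\A(\tau)|y}|^2\,\right]$, observe that coherence-to-population transfer is second order in time so that the generator $D=\dot T(0)$ is diagonal, deduce that $T(\tau)=\e^{\tau D}$ is diagonal for all times, hence that $\A(\tau)$ and its generator are diagonal in the measurement basis, and only then invoke nondegeneracy---via simplicity of the spectrum of the generator---to force the measurement basis to be $\{\ket{\e_\ell}\}$ up to phases. Both arguments are sound. Your route buys a cleaner separation of roles: the semigroup step needs neither $[\H_\e,\Gamma]=0$ nor nondegeneracy, so you in fact prove the stronger intermediate statement that, in the flat-coupling limit, classicality forces the measurement basis to diagonalize $\A(t)$; it also connects naturally to the coherence-generation (NCGD) picture of~\cite{Andrea-1,Andrea-2}. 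The paper's route is more elementary and computational, never leaving the level of transition probabilities. Two cosmetic points: with the paper's normalization $\Gamma=2\sum_j\gamma_j\ketbra{\e_j}{\e_j}$, the eigenvalues of the generator $-\i\H_\e-\tfrac12\Gamma$ are $-\i\omega_\ell-\gamma_\ell$ rather than $-\i\omega_\ell-\tfrac12\gamma_\ell$ (their real parts are still pairwise distinct, so nothing changes); and Eq.~\eqref{eq:ck-0y} follows from summing Eq.~\eqref{eq:ck-xy} over $x$ together with normalization of the probabilities, rather than from trace preservation alone.
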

	
	The proof of this statement is reported in Appendix~\ref{app:1a}; we shall discuss here the meaning of this statement. When $[H_\e,\Gamma]=0$, the basis $\{\ket{\e_j}\}_{j=1,\dots,d}$ of eigenvectors of the decay operator $\Gamma$ is also an eigenbasis for $H_\e$. Consequently, as discussed in the previous section, the evolution of the system does not ``mix" the various eigenstates nor the modes of the boson field, whence the system effectively behaves as a two-level system interacting with a single bosonic mode. In this situation, a simple computation shows that both Chapman--Kolmogorov equations are satisfied. The less trivial part of the proof of the statement above is to show that, indeed, this is the \textit{only} scenario in which classicality holds: any other measurement basis will cause nonclassical effects to emerge.
	
	It is instructive to investigate this phenomenon at the light of the results of~\cite{Andrea-1,Andrea-2} about the classicality of quantum Markov processes, which we briefly summarized in the Introduction. To this purpose, let us recall their results in a more precise way. A family of propagators is said to be non-coherence-generating-and-detecting (NCGD) with respect to a basis $\{\ket{x}\}_{x=0,\dots,d}$ of the system Hilbert space if it satisfies the following equality:			
	\begin{equation}\label{DDD}
		\Delta \circ \Lambda_{t_{k+1},t_k} \circ \Delta \circ \Lambda_{t_{k},t_{k-1}} \circ \Delta = \Delta  \circ \Lambda_{t_{k+1},t_{k-1}} \circ \Delta ,
	\end{equation}
	where
	\begin{equation}
		\Delta = \sum_{x=0}^d \mathcal{P}_x ,\qquad \mathcal{P}_x=\ketbra{x}{x}\cdot\ketbra{x}{x},
	\end{equation}
	is a quantum channel representing perfect decoherence w.r.t. $\{|0\>,|x\>_{x=1}^d\}$. Eq.~\eqref{DDD} essentially means that, at any intermediate time $t_k$ between any couple of measurements at two times $t_{k+1}$ and $t_{k-1}$, not doing any measure---that is, applying the identity map---is indistinguishable from performing a measurement and then averaging on all results---that is, applying $\Delta$. In its essence, this condition may be regarded as the transposition of the requirement~\eqref{eq:consistency} from the level of multitime statistics to the underlying quantum channels which realize the statistics; this is essentially the content of the following characterization:
	
	\begin{theorem}[\!\!\cite{Andrea-2}, Theorem 1]
		Consider a family of joint probabilities $\{\mathbb{P}_n\}_{n}$ satisfying the Markov property. Then the family is classical if and only if there exist, on a suitable Hilbert space $\hilb$,
		\begin{itemize}
			\item a basis $\{\ket{x}\}_{x=0,\dots,d}$;
			\item a state $\tilde{\rho}_{0}\in\mathcal{B}(\hilb)$ which is diagonal in said basis;
			\item a family of quantum channels $\tilde{\Lambda}_{t_{j+1},t_j}$ which are NCGD with respect to said basis,
		\end{itemize}
		such that
		\begin{equation}
			\mathbb{P}_n(x_n,t_n;\dots;x_1,t_1)=\tr\left[\mathcal{P}_{x_n}\tilde{\Lambda}_{t_n,t_{n-1}}\cdots\mathcal{P}_{x_1}\tilde{\Lambda}_{t_1}(\tilde{\rho}_{0})\right].
		\end{equation}
	\end{theorem}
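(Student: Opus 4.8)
The plan is to prove the two implications of the \emph{iff} separately, exploiting only the elementary algebra of the full-dephasing channel $\Delta$ and of the sharp-measurement maps $\mathcal{P}_x$. The two facts I would invoke repeatedly are that each projection factors through $\Delta$, since $\mathcal{P}_x(\sigma)=\braket{x|\sigma|x}\,\ketbra{x}{x}=\mathcal{P}_x\Delta(\sigma)=\Delta\mathcal{P}_x(\sigma)$, and that $\Delta$ is idempotent and trace preserving. It is worth noting at the outset that the right-hand side expression is \emph{automatically} Markovian: after each projection $\mathcal{P}_{x_j}$ the post-measurement state is a multiple of $\ketbra{x_j}{x_j}$, so all dependence on $x_1,\dots,x_{j-1}$ drops out and $\mathbb{P}(x_{j+1},t_{j+1}\mid x_j,t_j;\dots)=\braket{x_{j+1}|\tilde{\Lambda}_{t_{j+1},t_j}(\ketbra{x_j}{x_j})|x_{j+1}}$ depends on $x_j$ alone. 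Thus the Markov hypothesis is compatible on both sides, and the real content is the equivalence between classicality and the NCGD property.

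For the implication ``NCGD representation $\Rightarrow$ classicality'' I would verify the consistency condition~\eqref{eq:consistency} for an arbitrary marginalized time. Using $\mathcal{P}_x=\Delta\mathcal{P}_x\Delta$ together with $\tilde{\rho}_0=\Delta\tilde{\rho}_0$, every channel $\tilde{\Lambda}_{t_{k+1},t_k}$ in the expression for $\mathbb{P}_n$ may be surrounded by $\Delta$'s at no cost. Marginalizing over an \emph{intermediate} outcome $x_j$ replaces $\mathcal{P}_{x_j}$ by $\sum_{x_j}\mathcal{P}_{x_j}=\Delta$, producing the block $\Delta\,\tilde{\Lambda}_{t_{j+1},t_j}\,\Delta\,\tilde{\Lambda}_{t_j,t_{j-1}}\,\Delta$, which the NCGD identity~\eqref{DDD} collapses to $\Delta\,\tilde{\Lambda}_{t_{j+1},t_{j-1}}\,\Delta$; this is exactly the $(n-1)$-time expression with the measurement at $t_j$ deleted. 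The two boundary marginalizations use the remaining hypotheses: deleting the latest outcome $x_n$ follows from trace preservation of $\Delta$ and of the channels, while deleting the earliest outcome $x_1$ is where diagonality of $\tilde{\rho}_0$ enters, so that a leading $\Delta$ can be inserted and~\eqref{DDD} applied at the initial time. Since consistency then holds for all $n$ and all positions, the family is classical.

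For the converse I would \emph{not} try to show that any ``native'' channel realization of the statistics is NCGD; I would instead construct a fresh, manifestly NCGD representation directly from the given $\{\mathbb{P}_n\}$. By hypothesis the family is Markovian, hence reconstructed from $\mathbb{P}_1$ and the transition probabilities via~\eqref{eq:reconstruct}, and classicality forces these transitions to satisfy Chapman--Kolmogorov~\eqref{eq:chapman}. On $\hilb=\mathbb{C}^{d+1}$ with basis $\{\ket{x}\}_{x=0,\dots,d}$ I would take $\tilde{\rho}_0=\sum_y \pi_y\,\ketbra{y}{y}$, with $\pi_y=\mathbb{P}_1(y,t_0)$, and define the entanglement-breaking (measure-and-prepare) channels $\tilde{\Lambda}_{t,s}(\rho)=\sum_{x,y}\mathbb{P}(x,t\mid y,s)\,\braket{y|\rho|y}\,\ketbra{x}{x}$. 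These are CPTP by construction; since they read and produce only diagonal operators, $\Delta\tilde{\Lambda}_{t,s}\Delta=\tilde{\Lambda}_{t,s}$, so the NCGD identity~\eqref{DDD} reduces to the composition rule $\tilde{\Lambda}_{t_{k+1},t_k}\tilde{\Lambda}_{t_k,t_{k-1}}=\tilde{\Lambda}_{t_{k+1},t_{k-1}}$, which is precisely Chapman--Kolmogorov~\eqref{eq:chapman}. Evaluating the regression expression with these channels telescopes into the classical Markov-chain product $\mathbb{P}_1(x_1,t_1)\prod_j\mathbb{P}(x_{j+1},t_{j+1}\mid x_j,t_j)$, which equals $\mathbb{P}_n$ by~\eqref{eq:reconstruct}.

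The main obstacle I anticipate is not a single computation but the bookkeeping in the forward direction: one must check that the $\Delta$-insertions are simultaneously legitimate at every slot and treat the two boundary marginalizations ($j=1$ and $j=n$) separately, since they invoke diagonality of $\tilde{\rho}_0$ and trace preservation rather than~\eqref{DDD} directly. A secondary, conceptually important subtlety—worth flagging in a remark—is that the NCGD representation furnished by the converse is in general \emph{not} the reduced dynamics of the original open system: classicality of the statistics constrains a realizing dynamics only on diagonal operators and on the reachable support, so a direct argument on the native channels would face support issues that the entanglement-breaking construction cleanly sidesteps.
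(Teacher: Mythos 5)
Your proposal cannot be checked against the paper's own proof, because the paper does not contain one: this statement is imported verbatim as Theorem~1 of Ref.~\cite{Andrea-2} and is stated without proof, the paper only quoting (in the discussion that follows it) the explicit construction of the channels $\tilde{\Lambda}_{t_{j+1},t_j}$ and the state $\tilde{\rho}_0$ used in \cite{Andrea-2} for the converse direction. Judged on its own terms, your two-sided argument is sound and is essentially the argument of the original reference. In the forward direction, the identities $\mathcal{P}_x=\Delta\mathcal{P}_x=\mathcal{P}_x\Delta$, the diagonality of $\tilde{\rho}_0$ and trace preservation do legitimize all the $\Delta$-insertions; marginalizing an intermediate outcome turns $\mathcal{P}_{x_j}$ into $\Delta$, and \eqref{DDD} collapses $\Delta\tilde{\Lambda}_{t_{j+1},t_j}\Delta\tilde{\Lambda}_{t_j,t_{j-1}}\Delta$ into $\Delta\tilde{\Lambda}_{t_{j+1},t_{j-1}}\Delta$, giving the $(n-1)$-time formula; your separate treatment of the two boundary cases (trace preservation at $t_n$; diagonality of $\tilde{\rho}_0$ plus \eqref{DDD} anchored at the initial time for $t_1$) is exactly the right bookkeeping, with the one proviso that \eqref{DDD} must be read as holding for \emph{arbitrary} ordered triples of times, including those anchored at the initial time. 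In the converse direction, your measure-and-prepare channels $\tilde{\Lambda}_{t,s}(\rho)=\sum_{x,y}\mathbb{P}(x,t|y,s)\braket{y|\rho|y}\,\ketbra{x}{x}$ with diagonal $\tilde{\rho}_0$ coincide with the construction the paper quotes---indeed your formula is its \emph{corrected} version: as printed in the paper, the right-hand side $\delta_{x_j,y_j}\sum_{x_{j+1}}\mathbb{P}(x_{j+1},t_{j+1}|x_j,t_j)\ketbra{x_j}{x_j}$ sums to $\delta_{x_j,y_j}\ketbra{x_j}{x_j}$ by normalization of the conditional probabilities, so the channel would degenerate to $\Delta$ and could not reproduce the statistics; the output projector must be $\ketbra{x_{j+1}}{x_{j+1}}$, as you wrote. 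Your reduction of \eqref{DDD} to Chapman--Kolmogorov~\eqref{eq:chapman} for these channels, and the telescoping of the representation into the Markov-chain product~\eqref{eq:reconstruct}, are both correct; the only residual technicality, which your closing remark appropriately flags, is that transition probabilities are not uniquely defined at null events ($\mathbb{P}_1(y,s)=0$), so Chapman--Kolmogorov can only be demanded on the support---a point also glossed over in the original reference.
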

	
	\textit{Apparently}, the theorem above may seem in contrast with the content of Prop.~\ref{PRO-D}, since the initial state assumed in our model, $\rho_{0}=\ketbra{\psi_\e}{\psi_\e}$, is generally not diagonal in the measurement basis. In fact, this is not the case. As thoroughly discussed in~\cite{Andrea-2}, the maps $\tilde{\Lambda}_{t,s}$ and the initial state $\rho_{0}$ define an \textit{artificial} dynamics of the system which, for a multitime statistics obtained by a preexisting underlying quantum system, does \textit{not} generally coincide with the actual reduced dynamics nor the actual initial state, while reproducing the same statistics.
	
	This can be better understood by looking at the explicit construction of $\tilde{\Lambda}_{t_{j+1},t_j}$ and $\tilde{\rho}_{0}$ for a given multitime statistics provided in~\cite{Andrea-2}. Setting $t_1=0$, they define the channels $\tilde{\Lambda}_{t_{j+1},t_j}$ and the initial state $\tilde{\rho}_{0}$ via
	\begin{eqnarray}
		\tilde{\Lambda}_{t_{j+1},t_j}\left(\ketbra{x_j}{y_j}\right)&=&\delta_{x_j,y_j}\sum_{x_{j+1}=0}^d\mathbb{P}\left(x_{j+1},t_{j+1}|x_j,t_j\right)\ketbra{x_j}{x_j};\\
		\tilde{\rho}_{0}&=&\sum_{x_1=0}^d\mathbb{P}_1(x_1,t_1)\ketbra{x_1}{x_1},
	\end{eqnarray}
	and clearly, in our case, neither $\tilde{\Lambda}_{t_{j+1},t_j}$ nor $\tilde{\rho}_{0}$ coincide with the actual reduced dynamics of the model nor the initial state. However, it is worth noticing that the actual propagators $\Lambda_{t_j,t_{j+1}}$ are indeed NCDG in our case.
	
	Summing up: the multilevel spin--boson model $\textbf{H}$, repeatedly probed via sharp measurements associated via an orthonormal basis compatible with the $\hilb_{\e}\oplus\hilb_\g$ splitting, is Markovian---in the sense adopted in this paper---in the limit of flat couplings. Besides, while generally non-classical, the process becomes classical if $[\H_\e,\Gamma]=0$ \textit{and} the chosen measurement basis coincides with the common basis of eigenvectors of $\H_\e$ and $\Gamma$. Roughly speaking, this happens because, when adopting this particular basis, no probability exchange between states competing with different eigenvectors $\ket{\e_j}$ happens---no fundamentally quantum feature is unveiled in the measurement. Choosing any other basis will reveal the nonclassicality of the process.
	
	\subsection{Measurements in arbitrary bases}\label{subsec:arbitrary}	
	
	Consider now a projective measurements w.r.t. to an arbitrary orthonormal basis $\{|x_\alpha\>\}_{\alpha=0,\dots,d}$ in $\mathcal{H}$. With this choice, it is no longer true that the entire process is constrained in the single-excitation sector: in fact, after the $n$th measurement, the following vector
	
	\begin{equation}
		|\Psi^{x_n,\ldots,x_1}_{t_n,\ldots,t_1}\> = P_{x_n} \otimes \oper_B U_{\Dt_{n}} \ldots U_{\Dt_2} P_{x_1} \otimes \oper_B U_{t_1} |\Psi_0\> ,
	\end{equation}
	belongs to the $(n+1)$-excitation sector. This fact makes the analysis quite technical and we present all the necessary details in the Appendix. Evidently, the process is not classical; however, it is still Markovian.
	
	\begin{theorem} \label{THM}
		The process defined by		
		\begin{equation}
			\mathbb{P}_n(x_n,t_n;\ldots ;x_1,t_1) = \|  \, |\Psi^{x_n,\ldots,x_1}_{t_n,\ldots,t_1}\>\, \|^2
		\end{equation}
		satisfies the regression equality~\eqref{eq:probquantum_regression} for an arbitrary orthonormal basis $\{|x_\alpha\>\}_{\alpha=0,\dots,d}\subset\hilb_\e\oplus\hilb_\g$.
	\end{theorem}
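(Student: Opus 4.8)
The plan is to establish the regression equality~\eqref{eq:probquantum_regression} directly, by proving the one-step recursion
\begin{equation}
	\mathbb{P}_n(x_n,t_n;\ldots;x_1,t_1)=\bra{x_n}\Lambda_{\Dt_n}\!\left(\ketbra{x_{n-1}}{x_{n-1}}\right)\ket{x_n}\,\mathbb{P}_{n-1}(x_{n-1},t_{n-1};\ldots;x_1,t_1)
\end{equation}
by induction on $n$. Unwinding this recursion, together with the semigroup property $\Lambda_{t_{j+1},t_j}=\Lambda_{\Dt_{j+1}}$ valid in the flat-coupling limit, reproduces~\eqref{eq:probquantum_regression} for all $n$; Markovianity of the process then follows for free through Prop.~\ref{prop:general}. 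The base case $n=1$ is the tautology $\mathbb{P}_1(x_1,t_1)=\bra{x_1}\Lambda_{t_1}(\ketbra{\psi_0}{\psi_0})\ket{x_1}$, which holds for any reduced dynamics. The engine of the induction is the combination of the exactly solvable single-excitation evolution with Prop.~\ref{PRO-A} and the compact-support property~\eqref{eq:propa}: together these ensure that photons emitted in disjoint time windows free-stream rigidly and stay mutually orthogonal.

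For the inductive step I would carry along the strengthened hypothesis that the post-measurement vector factorizes as $\ket{\Psi^{x_{n-1},\ldots,x_1}_{t_{n-1},\ldots,t_1}}=\ket{x_{n-1}}\otimes\ket{\Phi_{n-1}}$, where $\ket{\Phi_{n-1}}$ is a superposition of multiphoton states all of whose one-photon wavefunctions are supported in the spatial interval $[0,t_{n-1}]$, and $\|\ket{\Phi_{n-1}}\|^2=\mathbb{P}_{n-1}$. Writing $\ket{\Phi_{n-1}}=\hat\Phi_{n-1}\ket{\vac}$ with $\hat\Phi_{n-1}$ a polynomial in the creation operators, I would apply $U_{\Dt_n}$, insert $U_{\Dt_n}^\dagger U_{\Dt_n}$, and iterate Prop.~\ref{PRO-A} on each factor to obtain $U_{\Dt_n}(\oper_S\otimes\hat\Phi_{n-1})U_{\Dt_n}^\dagger=\oper_S\otimes\hat\Phi_{n-1}^{(\Dt_n)}$, the same polynomial with every wavefunction rigidly translated by $\Dt_n$ and hence now supported in $[\Dt_n,t_n]$. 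The leftover factor $U_{\Dt_n}(\ket{x_{n-1}}\otimes\ket{\vac})$ is governed by the solvable single-excitation dynamics: its ground component stays put while its excited component partly decays, emitting a fresh photon whose wavefunction is, by~\eqref{eq:propa} with elapsed time $\Dt_n$, supported in $[0,\Dt_n]$.

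Projecting with $P_{x_n}\otimes\oper_B$ would then give $\ket{\Phi_n}=\hat\Phi_{n-1}^{(\Dt_n)}\ket{\beta_n}$, where $\ket{\beta_n}:=\bra{x_n}\otimes\oper_B\,U_{\Dt_n}(\ket{x_{n-1}}\otimes\ket{\vac})$ is exactly the single-step bosonic state, carrying only the fresh photon in $[0,\Dt_n]$ and satisfying $\|\ket{\beta_n}\|^2=\bra{x_n}\Lambda_{\Dt_n}(\ketbra{x_{n-1}}{x_{n-1}})\ket{x_n}$ (this is the $n=1$ identity applied to the initial vector $\ket{x_{n-1}}\otimes\ket{\vac}$). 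Here lies the crux: the old photons created by $\hat\Phi_{n-1}^{(\Dt_n)}$ live in $[\Dt_n,t_n]$ while the fresh one lives in $[0,\Dt_n]$, disjoint intervals, so the one-photon space splits orthogonally and the whole Fock space factorizes as a tensor product of an ``old'' and a ``fresh'' sector. Under this splitting $\ket{\Phi_n}=\ket{\Phi_{n-1}^{(\Dt_n)}}\otimes\ket{\beta_n}$, and taking norms yields $\mathbb{P}_n=\|\ket{\Phi_{n-1}^{(\Dt_n)}}\|^2\,\|\ket{\beta_n}\|^2=\mathbb{P}_{n-1}\,\bra{x_n}\Lambda_{\Dt_n}(\ketbra{x_{n-1}}{x_{n-1}})\ket{x_n}$, since free-streaming is unitary and preserves the norm of the old component. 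This is the desired recursion, and it simultaneously re-establishes the hypothesis, all photons of $\ket{\Phi_n}$ now lying in $[0,t_n]$.

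I expect the main obstacle to be conceptual rather than computational. In contrast with the compatible-basis case, the post-measurement vectors genuinely populate the higher-excitation sectors, whose dynamics is not solvable, so any direct attempt to propagate them is hopeless. The entire argument hinges on recognizing that Prop.~\ref{PRO-A} turns the action of $U_{\Dt_n}$ on the already-emitted photons into a rigid free flight that commutes past the system degrees of freedom and factors out, and that the compact support~\eqref{eq:propa} renders photons from distinct time windows exactly orthogonal, so that the Fock space factorizes at every step and the norm telescopes into the regression product. The two points demanding genuine care are the justification of the operator identity $U_{\Dt_n}(\oper_S\otimes\hat\Phi_{n-1})U_{\Dt_n}^\dagger=\oper_S\otimes\hat\Phi_{n-1}^{(\Dt_n)}$ for multiphoton polynomials by iterating Prop.~\ref{PRO-A}, and the verification that the overlap between any old wavefunction on $[\Dt_n,t_n]$ and the fresh one on $[0,\Dt_n]$ vanishes, which makes the tensor factorization---and hence the exact multiplicativity of the norm---rigorous.
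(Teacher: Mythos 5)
Your proposal is correct, and it rests on the same two physical pillars as the paper's own proof: the free-streaming property of Prop.~\ref{PRO-A} (extended to products of creation operators), and the spatial disjointness of the photon emitted during the window $[t_{n-1},t_n]$ (supported in $[0,\Dt_n]$) from the rigidly translated older photons (supported in $[\Dt_n,t_n]$). Where you genuinely diverge is in the mechanism used to establish norm multiplicativity. The paper handles the old--fresh cross terms by an explicit Fourier computation exploiting the flat form factors (Eq.~\eqref{A15}: the $\omega$-integral produces $2\pi\delta(s-[t_1+s'])$, whose argument never vanishes on the integration domain), and proves the multiphoton norm factorization (Lemma leading to Eq.~\eqref{bb}, then Prop.~\ref{PRO-bbb}) by a separate induction that uses only unitarity of $U_{\Dt_{n+1}}$, orthogonality of the $\hilb_\e$ and $\hilb_\g$ components in the \emph{system} factor, and the normalization identity $\|\A(\Dt)|\tilde{x}\>\|^2+\sum_j\|\tilde{\eta}_j(\Dt)\|^2=\||\tilde{x}\>\|^2$; the $n$-step case is then asserted to follow by iterating the two-point computation. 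You obtain both of these technical lemmas at one stroke from the exponential property of Fock space, $\mathcal{F}(K_1\oplus K_2)\cong\mathcal{F}(K_1)\otimes\mathcal{F}(K_2)$, applied to the orthogonal splitting of the one-photon space by supports, together with unitarity and vacuum-invariance of the second-quantized rigid translation. What each route buys: your single induction, with the strengthened hypothesis that the post-measurement vector is $|x_{n-1}\>\otimes\hat\Phi_{n-1}|\vac\>$ with all photons in $[0,t_{n-1}]$ and $\|\hat\Phi_{n-1}|\vac\>\|^2=\mathbb{P}_{n-1}$, turns the paper's ``this scheme may be immediately generalized'' into a fully explicit recursion, and makes Remark~\ref{RE-a} (arbitrary, possibly different, bases at each time) transparent, since nothing in the step ties the basis at time $t_n$ to the earlier ones; the paper's route is more elementary and self-contained, using nothing beyond canonical commutation relations and explicit integrals, at the price of splitting the argument into two computational lemmas plus a sketched generalization. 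The two points you flag as demanding care---conjugating a polynomial in creation operators through $U_{\Dt_n}$ by iterating Prop.~\ref{PRO-A}, and the exact vanishing of old--fresh overlaps---are precisely the content of the paper's appendix lemmas, so your assessment of where the real work lies is accurate.
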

	
	The proof, reported in the Appendix, is based on the following observation: any basis vector $|x_\alpha\>$ may be uniquely decomposed as $|x_\alpha\> = \mu_\alpha |0\> \oplus |\tilde{x}_\alpha\>$ with $|\tilde{x}_\alpha\> \in \mathcal{H}_\e$. One has		
	\begin{equation}
		U_{t}|\tilde{x}_\alpha\> \otimes |\vac\> =  \A(t)|\tilde{x}_\alpha \>  \otimes |\vac\> + |0\> \otimes \sum_{j=1}^d  b_j^\dagger(\tilde{\eta}^{(\alpha)}_j(t))|\vac\>  ,
	\end{equation}
	where the function $\eta^{(\alpha)}_j(\tau)$ is defined via	
	\begin{equation}\label{eta-t}
		\tilde{\eta}^{(\alpha)}_j(t,\omega) = -i\int_0^\tau \mathrm{d}\omega\;\e^{-\i\omega(t-s)}  f_j(\omega) \< \e_j|\A(s)|\tilde{x}_\alpha\>  , \ \ \ j=1,\ldots,d .
	\end{equation}
	Similarly, let us define	
	\begin{equation}
		\tilde{\eta}^{(\alpha)}_j(\tau,t,\omega) = \e^{-\i\omega \tau} \tilde{\eta}^{(\alpha)}_j(t,\omega) .
	\end{equation}
	With these definitions, the proof is based on two technical properties that hold specifically in the flat coupling limit, the first being Prop.~\ref{PRO-A}, and the second one, again shown in the Appendix, being the following one:
	\begin{proposition} \label{PRO-bbb} The following property holds:		
		\begin{eqnarray}   \label{bbb}
			&& \left\|\,  b^\dagger_{j_1}(\xi_{j_1}(t_{n+1}-t_1,\Dt_1)) \, \prod_{k=2}^n
			\sum_{j_k=1}^d b^\dagger_{j_k}(\tilde{\eta}_{j_k}^{(k-1)}(t_{n+1}-t_k,\Dt_k)) |\vac\> \right\|^2 \nonumber \\
			&& = \left\|  b^\dagger_{j_1}(\xi_{j_1}(\Dt_1))\right\|^2  \, \prod_{k=2}^n
			\left\|
			\sum_{j_k=1}^d b^\dagger_{j_k}(\tilde{\eta}_{j_k}^{(k-1)}(t_{n+1}-t_k,\Dt_k)) \right\|^2  ,
		\end{eqnarray}
		for any $n\geq 1$.
	\end{proposition}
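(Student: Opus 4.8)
The plan is to reduce the claimed factorization to the mutual orthogonality of the boson wavefunctions carried by the different factors, and then to invoke the elementary fact that bosonic modes built from mutually orthogonal single-particle wavefunctions are independent, so that the norm of a product of the corresponding creation operators acting on the vacuum factorizes into the product of the individual norms.

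First I would pass to the position representation. By the very same computation that produced Eq.~\eqref{eq:propa} for $\xi_j$, the Fourier antitransform of the wavefunction $\tilde{\eta}^{(\alpha)}_j(t,\omega)$ defined in Eq.~\eqref{eta-t} is entirely supported on $[0,t]$; moreover, the phase prefactor in $\xi_j(\tau,t,\omega)=\e^{-\i\omega\tau}\xi_j(t,\omega)$ (and analogously for $\tilde{\eta}$) acts in position space as a rigid translation by $\tau$, so that the shifted wavefunction is supported on $[\tau,\tau+t]$. Feeding into this the time arguments appearing in Eq.~\eqref{bbb}, and using $\Dt_1=t_1$ and $\Dt_k=t_k-t_{k-1}$, I would verify that in position space the first factor is supported on $[t_{n+1}-t_1,\,t_{n+1}]$, while for $k=2,\dots,n$ the $k$-th factor is supported on $[t_{n+1}-t_k,\,t_{n+1}-t_{k-1}]$. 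These are consecutive intervals tiling $[t_{n+1}-t_n,\,t_{n+1}]$, hence pairwise disjoint up to their endpoints.

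Crucially, the support interval of each factor is fixed by its time arguments alone and does not depend on the mode label $j$. Therefore, for any two distinct factors $k\neq l$ and any fixed $j$, the single-mode wavefunctions occurring there have disjoint position-space supports, so that Plancherel's theorem forces their $L^2(\mathrm{d}\omega)$ overlap to vanish; summing over $j$, the full multimode overlap of the two factors vanishes as well. Writing $B_k$ for the ($j_k$-summed, or single-mode for $k=1$) creation operator appearing as the $k$-th factor of Eq.~\eqref{bbb} and $|\chi_k\>=B_k|\vac\>$ for the associated one-boson state, this says precisely that the Gram matrix $G_{kl}=\<\chi_k|\chi_l\>$ is diagonal.

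I would then conclude by the standard bosonic argument: since the creation operators $B_1,\dots,B_n$ mutually commute and the $|\chi_k\>$ span pairwise orthogonal subspaces of the one-boson space, the Fock space factorizes over these sectors and $B_1\cdots B_n|\vac\>$ is a product state; equivalently, $\|B_1\cdots B_n|\vac\>\|^2=\mathrm{per}(G)$, which for a diagonal $G$ collapses to $\prod_k G_{kk}=\prod_k\|B_k|\vac\>\|^2$. Finally, since the phase $\e^{-\i\omega\tau}$ leaves each norm invariant, $\|b^\dagger_{j_1}(\xi_{j_1}(t_{n+1}-t_1,\Dt_1))|\vac\>\|=\|b^\dagger_{j_1}(\xi_{j_1}(\Dt_1))|\vac\>\|$, so that the product of diagonal entries is exactly the right-hand side of Eq.~\eqref{bbb}. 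The only genuine obstacle is the bookkeeping in the first step---carefully tracking the interplay of the time arguments $(\tau,t)$ to confirm that the supports really are the claimed consecutive, non-overlapping intervals; once disjointness is established, the orthogonality of the packets and the factorization of the norm are routine.
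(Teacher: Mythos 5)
Your proof is correct, but it takes a genuinely different route from the paper's. The paper proves Prop.~\ref{PRO-bbb} by induction on the number of factors: it applies $U_{\Dt_{n+1}}$ to $|\tilde{x}_{n+1}\>$ tensored with the $n$-factor product state, uses unitarity of $U_{\Dt_{n+1}}$ together with the explicit single-excitation dynamics (Prop.~\ref{PRO-A}), and then invokes the normalization identity $\|\A(\Dt_{n+1})|\tilde{x}_{n+1}\>\|^2+\sum_{j}\|\tilde{\eta}^{(n)}_j(\Dt_{n+1})\|^2=\||\tilde{x}_{n+1}\>\|^2$ to solve for the $(n{+}1)$-factor norm in terms of the $n$-factor one---at no point does it compute an overlap between two boson packets. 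You instead establish the factorization directly: the packets attached to distinct measurement steps occupy essentially disjoint position-space supports ($[t_{n+1}-t_1,t_{n+1}]$ for the first factor, $[t_{n+1}-t_k,t_{n+1}-t_{k-1}]$ for the $k$th), hence are pairwise orthogonal by Plancherel, and the norm of a product of commuting creation operators on the vacuum collapses, via $\|B_1\cdots B_n|\vac\>\|^2=\mathrm{per}(G)$, to the product of the diagonal Gram entries; your time bookkeeping checks out, and the observation that the supports depend only on the time arguments (not on the mode label) is exactly what makes the $j$-summed overlaps vanish. It is worth noting that your orthogonality mechanism does appear in the paper, but only in the Lemma of Appendix~\ref{app:2} (Eq.~\eqref{A15}), where the identity $\int\mathrm{d}\omega\,\e^{\i\omega(t_1-s+s')}=2\pi\delta(s-[t_1+s'])$ is precisely the frequency-space version of your disjoint-support argument; the paper uses it only for that single two-packet overlap, not for Prop.~\ref{PRO-bbb} itself. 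What your route buys: a stronger and more physical conclusion (the multi-boson state is an honest product state over orthogonal one-particle sectors---the non-overlapping emission shells produced between consecutive measurements), with no induction. What the paper's route buys: it avoids the permanent machinery entirely and recycles the same unitarity-plus-normalization step that drives the two-measurement computation in Appendix~\ref{app:2}, so the argument stays within the toolkit already assembled for Theorem~\ref{THM}.
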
	
	The above property allows to prove the regression equality~\eqref{eq:probquantum_regression} (cf.\ Appendix) and thus Markovianity.
	
	\begin{remark} \label{RE-a}
		Actually, the proof of  Theorem~\ref{THM} shows that the regression equality~\eqref{eq:probquantum_regression} holds for a much more general measurement strategy, that is, without using the same fixed orthonormal basis to measure the system at $\{t_1,t_2,\ldots\}$, but rather using at each moment $t_k$ any arbitrary orthonormal basis $\mathcal{B}_k$. Then $\mathbb{P}_n(x_n,t_n;\ldots;x_1,t_1)$ defines the joint probability of obtaining $\{x_n,\ldots,x_1\}$ probing the system at times $\{t_n,\ldots,t_1\}$ w.r.t. with (arbitrary) orthonormal bases $\{\mathcal{B}_n,\ldots,\mathcal{B}_1\}$.
	\end{remark}
	
	\subsection{Past-future independence}
	
	We shall conclude the work with the following observation. Recently, Budini and collaborators~\cite{Budini1,Budini2,Budini3} introduced an interesting non-Markovianity witness based on the so-called \textit{conditional past-future} (CPF) correlations. Essentially, the vanishing of CPF correlations is equivalent to the validity of the regression equality~\eqref{eq:probquantum_regression} for  projective measurements w.r.t. arbitrary orthonormal basis $\{\mathcal{B}_n,\ldots,\mathcal{B}_1\}$ (cf.\ Remark~\ref{RE-a}).
	
	In this regard, consider a normalized state vector	
	\begin{equation}
		|\widetilde{\Psi}^{x_n,\ldots,x_1}_{t_n,\ldots,t_1}\> := \frac{| \Psi^{x_n,\ldots,x_1}_{t_n,\ldots,t_1} \>}{\sqrt{\mathbb{P}_n(x_n,t_n;\ldots;x_1,t_1)}} = |x_n\> \otimes |\varphi^{x_n,\ldots,x_1}_{t_n,\ldots,t_1}\> ,
	\end{equation}
	with $|\varphi^{x_n,\ldots,x_1}_{t_n,\ldots,t_1}\>$ corresponding to the boson state vector which depends on the history of all $n$ measurements up to time $t_n$. Now, the reduced dynamics starting at $t=t_n$	
	\begin{equation}
		|x_n \>\!\<x_n| \to   {\rm Tr}_B \Big( U_\tau |x_n\>\!\<x_n| \otimes |\varphi^{x_n,\ldots,x_1}_{t_n,\ldots,t_1}\>\!\< \varphi^{x_n,\ldots,x_1}_{t_n,\ldots,t_1}| U^\dagger_\tau \Big)
	\end{equation}
	provides a dynamical map which depends on the boson state vector $|\varphi^{x_n,\ldots,x_1}_{t_n,\ldots,t_1}\>$, and hence it depends upon the entire measurement history up to time $t_n$. The process is Markovian whenever the above dynamical map \textit{does not depend} upon the history of measurements records. For the multilevel spin-boson model we are considering, one has
	\begin{equation}
		{\rm Tr}_B \Big( U_\tau |x_n\>\!\<x_n| \otimes |\varphi^{x_n,\ldots,x_1}_{t_n,\ldots,t_1}\>\!\< \varphi^{x_n,\ldots,x_1}_{t_n,\ldots,t_1}| U^\dagger_\tau \Big)
		= {\rm Tr}_B \Big( U_\tau |x_n\>\!\<x_n| \otimes |\vac\>\!\<\vac| U_\tau^\dagger \Big) = \Lambda_\tau( |x_n\>\!\<x_n|) ,
	\end{equation}
	that is, one may erase the memory about measurement records replacing $|\varphi^{x_n,\ldots,x_1}_{t_n,\ldots,t_1}\>$ by the vacuum state $|\vac\>$.
	
	\section{Conclusions}\label{sec:conclusions}
	
	We have provided a detailed analysis of Markovianity and classicality for the multitime statistics associated with a paradigmatic model describing a multilevel system coupled with a multimode boson field, for which calculations can carried out explicitly, probed in an arbitrary fixed basis at different times. In the limit in which the coupling between the system and the environment is described by flat functions---that is, all field frequencies are coupled with the same strength---this system was proven to be Markovian for any measurement basis and any choice of the initial (pure) system--bath state. Furthermore, in the same limit, the model was proven to be classical if and only if the selected measurement basis coincides with the one (if any) that diagonalizes the reduced evolution of the excited sector; any other choice of basis will involve probability exchange between states competing with different eigenvectors, hence reveling the underlying quantum nature of the system. This result was compared with the characterization of classicality for quantum Markov processes in terms of the generation of coherence in the process, provided in~\cite{Andrea-1,Andrea-2}.
	
	Our work provides a detailed exposition of fundamental concepts of growing interest in recent years---Markovianity and classicality---applied to particular instances of quantum systems having a relatively simple structure, while nevertheless retaining a sufficiently rich phenomenology of possible cases. In the context of our proposed model, the slightly counterintuitive fact that quantum systems \textit{can} behave classically, when probed in a proper way, is easily interpreted in terms of the presence or absence of mixing between eigenvectors.
	
	Among many possible future developments of our work, it would be useful to provide a similar discussion of the classicality of our model beyond the flat-coupling limit---that is, outside the Markovian scenario. In this case, classicality can be characterized in terms of the generation of quantum discord rather than coherence~\cite{Andrea-2}. Intuitively, it would be tempting to conjecture that the model under investigation in the present work will behave classically when probed via the same measurement basis under which it is classical in the Markov regime: indeed, the reduced dynamics in the excited sector will be again diagonal in such a basis. We will leave a detailed answer of this question to future works. Finally, as a natural continuation of this line of research, \textit{quantifying} nonclassicality in spin--boson models and their generalizations, thus going forward the mere characterization of processes that happen to be classical, would be of paramount importance.
	
	\section*{Acknowledgments}
	
	DC was supported by the Polish National Science Center project No. 2018/30/A/ST2/00837. DL was partially supported by Istituto Nazionale di Fisica Nucleare (INFN) through the project “QUANTUM” and by the Italian National Group of Mathematical Physics (GNFM--INdAM), and acknowledges support by MIUR via PRIN 2017 (Progetto di Ricerca di Interesse Nazionale), project QUSHIP (2017SRNBRK); he also thanks the Institute of Physics at the Nicolaus University in Toru\'n for its hospitality.
	
	\appendix
	
	\section{Proof of Prop.~\ref{PRO-A}}\label{app:1}
	
	Following~\cite{hidden2} we prove the following result: given $\eta(\omega)$  let us consider the  corresponding Fourier (anti--)transform
	\begin{equation}
		\hat{\eta}(x) :=\int\mathrm{d}\omega\;\e^{\i\omega x}\eta(\omega) .
	\end{equation}
	
	\begin{lemma}
		Let $\hat{\eta}(x)=0$ for $x\in [-T,0]$ (with $T>0$). Then		
		\begin{equation}
			\e^{-\i t\textbf{H}}\oper\otimes b_\alpha^\dag(\eta)\e^{\i t\textbf{H}}=\oper\otimes b_\alpha^\dag\left(\e^{-\i t\omega}\eta\right)\qquad \forall t\in[0,T],
		\end{equation}
		or equivalently, in the position representation,
		\begin{equation}
			\e^{-\i t\textbf{H}}\oper\otimes b_\alpha^\dag(\hat\eta)\e^{\i t\textbf{H}}=\oper\otimes b_\alpha^\dag\left(\hat\eta(\cdot-t)\right)\qquad \forall t\in[0,T].
		\end{equation}
	\end{lemma}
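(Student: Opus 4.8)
The plan is to read the claimed identity as the solution of an operator-valued Heisenberg equation and to verify that the proposed right-hand side solves it. Setting $B(t):=\e^{-\i t\mathbf{H}}\,\oper\otimes b^\dagger_\alpha(\eta)\,\e^{\i t\mathbf{H}}$ and differentiating, while using that $\mathbf{H}$ commutes with $\e^{\pm\i t\mathbf{H}}$, yields the closed linear equation
\[
\dot B(t)=-\i\,[\mathbf{H},B(t)],\qquad B(0)=\oper\otimes b^\dagger_\alpha(\eta),
\]
so the whole problem reduces to computing the commutator of $\mathbf{H}$ with a smeared creation operator and then integrating the resulting flow.

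First I would compute $[\mathbf{H},\oper\otimes b^\dagger_\alpha(\zeta)]$ for an arbitrary smearing function $\zeta$. The free system term $H_\e\otimes\oper_{\rm B}$ drops out; the free-field term produces $\oper_{\rm S}\otimes b^\dagger_\alpha(\omega\zeta)$, where $\omega\zeta$ denotes the function $\omega\mapsto\omega\,\zeta(\omega)$, as a direct consequence of the canonical commutation relations; and in $H_{\rm int}$ only the Hermitian-conjugated (annihilation) piece contributes, giving the ``reabsorption'' term $c_\alpha(\zeta)\,|\e_\alpha\>\!\<0|\otimes\oper_{\rm B}$ with $c_\alpha(\zeta):=\int\mathrm{d}\omega\,f^*_\alpha(\omega)\zeta(\omega)$. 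Hence
\[
[\mathbf{H},\oper\otimes b^\dagger_\alpha(\zeta)]=\oper_{\rm S}\otimes b^\dagger_\alpha(\omega\zeta)+c_\alpha(\zeta)\,|\e_\alpha\>\!\<0|\otimes\oper_{\rm B}.
\]
I would then insert the ansatz $B(t)=\oper\otimes b^\dagger_\alpha(\eta_t)$ with $\eta_t(\omega):=\e^{-\i\omega t}\eta(\omega)$. Since $\dot\eta_t=-\i\omega\eta_t$, the free part of the commutator reproduces $\dot B(t)$ exactly, so the ansatz solves the equation \emph{if and only if} the reabsorption term vanishes along the flow, i.e.\ $c_\alpha(\eta_t)=0$ for all $t\in[0,T]$.

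The final step is where the flat coupling and the support hypothesis enter, and it is the crux of the argument. In the flat limit $f_\alpha(\omega)\equiv\sqrt{\gamma_\alpha/2\pi}$ is constant in $\omega$, so
\[
c_\alpha(\eta_t)=\sqrt{\gamma_\alpha/2\pi}\int\mathrm{d}\omega\,\e^{-\i\omega t}\eta(\omega)=\sqrt{\gamma_\alpha/2\pi}\;\hat\eta(-t),
\]
which is, up to a constant, precisely the value at the interaction point $x=0$ of the freely propagated wavepacket at time $t$. The hypothesis $\hat\eta(x)=0$ on $[-T,0]$ gives $\hat\eta(-t)=0$ for every $t\in[0,T]$, hence $c_\alpha(\eta_t)\equiv0$ there; by uniqueness of the solution of the linear operator equation with the prescribed initial datum, this forces $B(t)=\oper\otimes b^\dagger_\alpha(\eta_t)$ on $[0,T]$, which is the first identity. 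The position-space form follows at once, since $\eta_t$ has antitransform $\hat\eta_t(x)=\hat\eta(x-t)$. The picture behind the computation is transparent: flatness makes the coupling a contact interaction at $x=0$, so a right-moving boson whose wavefunction vanishes on $[-T,0]$ never reaches the atom during $[0,T]$ and simply free-evolves. I expect the main obstacle to be rigor rather than algebra: the form factor is non-normalizable in the flat limit, the commutators are distributional, and $\mathbf{H}$ is not manifestly self-adjoint, so the operator ODE and its uniqueness must be justified---for instance through the regularization and well-posedness results of the cited references, or by testing both sides on a dense set of one-excitation states---after which the formal manipulation above applies verbatim.
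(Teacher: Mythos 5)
Your proposal is correct and follows essentially the same route as the paper: both recast the claim as a Heisenberg equation for $B(t)=\e^{-\i t\mathbf{H}}\,\oper\otimes b^\dagger_\alpha(\eta)\,\e^{\i t\mathbf{H}}$, insert the ansatz $\oper\otimes b^\dagger_\alpha(\e^{-\i\omega t}\eta)$, and observe that the only obstruction is the interaction commutator, which in the flat limit equals $\sqrt{\gamma_\alpha/2\pi}\,\hat\eta(-t)\,|\e_\alpha\>\!\<0|$ and vanishes on $[0,T]$ by the support hypothesis, so uniqueness of the solution concludes the argument. Your explicit identification of where flatness enters, and your closing remarks on the distributional/self-adjointness caveats, are if anything slightly more careful than the paper's own presentation.
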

	\begin{proof}
		Let us define
		\begin{equation}
			B^\dag_\alpha(\eta;t):=\e^{-\i t\textbf{H}}\oper_S \otimes b_\alpha^\dag(\eta)\e^{\i t\textbf{H}},
		\end{equation}
		and note that this is the unique solution of the Heisenberg equation
		\begin{equation}\label{eq:werner}
			\begin{cases}
				\frac{\mathrm{d}}{\mathrm{d}t}B^\dag_\alpha(\eta;t)=-\i\left[\textbf{H},B^\dag_\alpha(\eta;t)\right],\\
				B^\dag_\alpha(\eta;0)=\oper \otimes b^\dag_\alpha(\eta).
			\end{cases}
		\end{equation}
		We shall verify that
		\begin{equation}\label{eq:ansatz}
			t\mapsto\oper\otimes b^\dag_\alpha(\e^{-\i t\omega}\eta)
		\end{equation}
		solves Eq.~\eqref{eq:werner} provided $\hat{\eta}(x)=0$ for $x\in [-T,0]$. Indeed, we have
		\begin{eqnarray}
			\frac{\mathrm{d}}{\mathrm{d}t}\left(\oper_S\otimes b^\dag_\alpha(\e^{-\i t\omega}\eta)\right)&=&\oper_S\otimes \frac{\mathrm{d}}{\mathrm{d}t}b^\dag_\alpha(\e^{-\i t\omega}\eta)\nonumber\\
			&=&-\i\oper_S\otimes[H_{\rm B},b^\dag_\alpha(\e^{-\i t\omega}\eta)],
		\end{eqnarray}
		whereas
		\begin{eqnarray}
			[\textbf{H},\oper\otimes b^\dag_\alpha(\e^{-\i t\omega}\eta)]&=&\oper\otimes[H_{\rm B},b^\dag_\alpha(\e^{-\i\omega t}\eta)]+\ketbra{\e_\alpha}{\g}\otimes[b_\alpha(f_\alpha),b^\dag_\alpha(\e^{-\i\omega t}\eta)].
		\end{eqnarray}
		But
		\begin{eqnarray}
			[b_\alpha(f_\alpha),b^\dag_\alpha(\e^{-\i\omega t}\eta)]&=&\int\mathrm{d}\omega\;f_\alpha(\omega)^*\e^{-\i\omega t}\eta(\omega)\nonumber\\
			&=&\sqrt{\gamma_\alpha}\int\mathrm{d}\omega\;\e^{-\i\omega t}\eta(\omega)\nonumber\\
			&=&\sqrt{\gamma_\alpha}\,\hat{\eta}(-t).
		\end{eqnarray}
		Therefore, under our assumptions, the desired equality holds for all times $t\in[0,T]$. In particular, if $\eta(x)=0$ for all $x\leq0$, the desired equality holds for all times $t\geq0$.	
	\end{proof}
	
	\section{Proof of Prop.~\ref{PRO-D}}\label{app:1a}
	
	Let us consider the particular case in which $[\H_\e,\Gamma]=0$, that is, both $\H_\e$ and $\Gamma$ are diagonal in the basis $\{\ket{\e_x}\}_{x=1,\dots,d}$. In this case, $\A(t)$ is diagonal as well in this basis, that is, by Eq.~\eqref{eq:at-diag}, choosing $\ket{x}=\ket{\e_x}$ for all $x=1,\dots,d$, the $1$-time probabilities become
	\begin{eqnarray}
		\mathbb{P}_1(x,t)&=&\e^{-\gamma_x t}|\braket{x|\psi_\e}|^2;\\
		\mathbb{P}_1(0,t)&=&1-\sum_{x=1}^d\e^{-\gamma_x t}|\braket{x|\psi_\e}|^2,
	\end{eqnarray}
	and the transition probabilities~\eqref{p11}--\eqref{p01} simplify as
	\begin{eqnarray}
		\mathbb{P}(x,t|y,s)  &=&   \e^{-\gamma_x(t-s)}\,\delta_{xy}  , \label{p11-diag} \\
		\mathbb{P}(0,t|y,s)  &=&  1-\e^{-\gamma_y(t-s)}\,\label{p01-diag},
	\end{eqnarray}
	and all Chapman--Kolmogorov equations are easily satisfied. If, instead, $\{\ket{x}\}_{x=1,\dots,d}$ is a generic orthonormal basis of $\hilb_{\e}$, then, again by Eq.~\eqref{eq:at-diag},
	\begin{equation}
		\left|\braket{x|\A(t-s)|y}\right|^2=\sum_{j=1}^d\e^{-\gamma_j(t-s)}|\!\braket{x|\e_j}\!|^2\,|\!\braket{y|\e_j}\!|^2,
	\end{equation}
	whence the transition probabilities~\eqref{p11}--\eqref{p01} read
	\begin{eqnarray}
		\mathbb{P}(x,t|y,s)  &=&   \sum_{j=1}^d\e^{-\gamma_j(t-s)}|\!\braket{x|\e_j}\!|^2\,|\!\braket{y|\e_j}\!|^2 , \label{p11-ndiag} \\
		\mathbb{P}(0,t|y,s)  &=&  1-\sum_{x=1}^d\sum_{j=1}^d\e^{-\gamma_j(t-s)}|\!\braket{x|\e_j}\!|^2\,|\!\braket{y|\e_j}\!|^2\nonumber\\
		&=&1-\sum_{j=1}^d\e^{-\gamma_j(t-s)}|\!\braket{y|\e_j}\!|^2\,\label{p01-ndiag},
	\end{eqnarray}
	where we used the property $\sum_j|\!\braket{x|\e_j}\!|^2=\|\!\ket{x}\!\|^2=1$. Let us now consider the Chapman--Kolmogorov equality~\eqref{eq:ck-xy} with $x=y$. We must have, for all $x=1,\dots,d$ and all $t\geq r\geq s\geq 0$,
	\begin{eqnarray}
		0&=&\mathbb{P}(x,t|x,s)-\sum_{z=0}^d\mathbb{P}(x,t|z,r)\mathbb{P}(z,r|x,s)\nonumber\\
		&=&\sum_{j,\ell=1}^d\e^{-\gamma_j(t-r)}\e^{-\gamma_\ell(r-s)}|\braket{x|\e_j}|^2|\braket{x|\e_\ell}|^2\left[\delta_{j\ell}-\sum_{z=1}^d|\braket{z|\e_j}|^2|\braket{z|\e_\ell}|^2\right];
	\end{eqnarray}
	that is, defining $\tau:=t-s$ and $\sigma:=r-s$, we must have\small
	\begin{equation}
		\forall\tau,\sigma\geq0,\;\forall x=1,\dots,d,\quad\sum_{j=1}^d\e^{-\gamma_j\tau}|\braket{x|\e_j}|^2\sum_{\ell=1}^d\e^{-\gamma_\ell\sigma}|\braket{x|\e_\ell}|^2\left[\delta_{j\ell}-\sum_{z=1}^d|\braket{z|\e_j}|^2|\braket{z|\e_\ell}|^2\right]=0;
	\end{equation}\normalsize
	but, since all $\gamma_j$s are distinct and the exponential functions are linearly independent, the equality above holds if and only if\small
	\begin{equation}
		\forall x,j=1,\dots,d,\quad \braket{x|\e_j}=0\;\;\text{or}\;\;\forall\sigma\geq0,\;\sum_{\ell=1}^d\e^{-\gamma_\ell\sigma}|\braket{x|\e_\ell}|^2\left[\delta_{j\ell}-\sum_{z=1}^d|\braket{z|\e_j}|^2|\braket{z|\e_\ell}|^2\right]=0,
	\end{equation}\normalsize
	which, for the same reason, is in turn equivalent to
	\begin{equation}
		\forall x,j,\ell=1,\dots,d,\quad\braket{x|\e_j}=0\;\;\text{or}\;\;\braket{x|\e_\ell}=0\;\;\text{or}\;\;\sum_{z=1}^d|\braket{z|\e_j}|^2|\braket{z|\e_\ell}|^2=\delta_{j\ell}
	\end{equation}
	and, in particular,
	\begin{equation}
		\forall x,j=1,\dots,d,\quad\braket{x|\e_j}=0\;\;\text{or}\;\;\sum_{z=1}^d|\braket{z|\e_j}|^4=1.
	\end{equation}
	Clearly, given $j=1,\dots,d$, there must exist some $x=1,\dots,d$ such that $\braket{x|\e_j}\neq0$. Consequently,
	\begin{equation}
		\sum_{z=1}^d|\braket{z|\e_j}|^4=1.
	\end{equation}
	Now, notice that $\sum_{z=1}^d|\braket{z|\e_j}|^2=\|\ket{\e_j}\|^2=1$. Consequently, the only case in which the equality above can hold is when all terms $|\braket{z|\e_j}|^2$ vanish expect a single one equaling one---that is, when the two bases coincide up to phase shifts; henceforth the claim.
	
	\section{Proof of Theorem~\ref{THM}}\label{app:2}
	
	Consider a projective measurements w.r.t. to $|x_\alpha\>$ $(\alpha = 0,1,\ldots,d$) such that	
	\begin{equation}
		|x_\alpha\> = \mu_\alpha |0\> \oplus |\tilde{x}_\alpha\> , \ \ \  |\tilde{x}_\alpha\> \in \mathcal{H}_\e ,
	\end{equation}
	and $|\mu_\alpha|^2 + \| |\tilde{x}_\alpha\>\|^2=1$. Taking $|\Psi_0\> = |\psi_0\> \otimes |\vac\>$, with $|\psi_0\> = \alpha |0\> \oplus |\psi_\e\>$ and using~\eqref{Psi-t} one finds	
	\begin{equation}\label{Psi-t2}
		|x_1\>\!\<x_1| \otimes \oper_B U_{t_1}|\Psi_0\> = |x_1\> \otimes \left( \Big[ \alpha \mu_1^*  + \< \psi_\e|\A(t_1)|\psi_\e\> \Big] |\vac\>  + \mu_1^*   \sum_{j=1}^d  b_j^\dagger(\xi_j(t_1))|\vac\> \right)   .
	\end{equation}
	Note that $|\Psi^{x_1}_{t_1}\>$ belongs now to the 2-excitation sector. Using the following property:
	\begin{equation}
		U_{\Dt_2} \oper_S \otimes b^\dagger_j(\xi_j(t_1)) U^\dagger_{\Dt_2} = \oper_S \otimes b^\dagger_j(\xi_j(\Dt_2,t_1)) ,
	\end{equation}
	one easily proves the following equalities:
	\begin{eqnarray}
		U_{\Dt_2} |0\> \otimes  |\vac\>  &=& |0\> \otimes  |\vac\> , \\
		U_{\Dt_2} |\tilde{x}_1\> \otimes  |\vac\> &=&  \A(\Dt_2) |\tilde{x}_1\> \otimes  |\vac\> + |0\> \otimes \sum_{k=1}^d b_k^\dagger(\tilde{\eta}^{(1)}_k(\Dt_2)) |\vac\> , \\
		U_{\Dt_2} |0\> \otimes  b^\dagger_j(\xi_l(t_1))|\vac\>  &=& |0\> \otimes b^\dagger_j(\xi_j(\Dt_2,t_1))|\vac\> ,\\
		U_{\Dt_2} |\tilde{x}_1\> \otimes  b^\dagger_j(\xi_j(t_1))|\vac\>  &=&  \A(\Dt_2) |\tilde{x}_1\> \otimes b^\dagger_j(\xi_{j}(\Dt_2,t_1))|\vac\> \nonumber \\
		&+& |0\> \otimes b^\dagger_j(\xi_{j}(\Dt_2,t_1)) \, \sum_{k=1}^d b^\dagger_k(\tilde{\eta}^{(1)}_{k}(\Dt_2))|\vac\> .
	\end{eqnarray}
	
	\begin{lemma}
		The following property holds:		
		\begin{equation}\label{bb}
			\left\|  b^\dagger_j(\xi_{j}(\Dt_2,t_1) \sum_{k=1}^d b^\dagger_k(\tilde{\eta}^{(1)}_{k}(\Dt_2))|\vac\> \right\|^2 =  \|\xi_{j}(t_1)\|^2 \, \sum_{k=1}^d \left\| \tilde{\eta}^{(1)}_{k}(\Dt_2)\right\|^2 .
		\end{equation}
	\end{lemma}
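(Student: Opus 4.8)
The plan is to expand the left-hand side directly via the canonical commutation relations, isolate the only potentially obstructing contribution, and then kill it using the compact-support property of the one-boson wavefunctions established in Section~\ref{sec:flat}.

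First I would abbreviate $\alpha:=\xi_j(\Dt_2,t_1)$ and $\beta_k:=\tilde\eta^{(1)}_k(\Dt_2)$, so that the vector whose norm is sought reads $|\Phi\>=b^\dagger_j(\alpha)\sum_{k=1}^d b^\dagger_k(\beta_k)|\vac\>$. Expanding $\<\Phi|\Phi\>$ and commuting the two annihilation operators to the right until they annihilate the vacuum---using $[b_i(\zeta),b^\dagger_k(\chi)]=\delta_{ik}\<\zeta|\chi\>$ and $b_i(\zeta)|\vac\>=0$---leaves exactly two surviving Wick contractions, giving
\begin{equation}
\<\Phi|\Phi\>=\|\alpha\|^2\sum_{k=1}^d\|\beta_k\|^2+|\<\alpha|\beta_j\>|^2 ,
\end{equation}
where only the $k=j$ summand contributes to the second term, on account of the Kronecker delta. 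Since the phase $\e^{-\i\omega\Dt_2}$ in $\xi_j(\Dt_2,t_1,\omega)=\e^{-\i\omega\Dt_2}\xi_j(t_1,\omega)$ has unit modulus, one has $\|\alpha\|^2=\|\xi_j(t_1)\|^2$; hence the first term already reproduces the claimed right-hand side, and it remains only to show that the cross term $|\<\xi_j(\Dt_2,t_1)|\tilde\eta^{(1)}_j(\Dt_2)\>|^2$ vanishes.

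The main (and essentially the only) obstacle is precisely this vanishing, which I would establish by passing to the position representation through the Fourier antitransform. By Eq.~\eqref{eq:propa} the function $\hat\xi_j(t_1,\cdot)$ is supported on $[0,t_1]$; multiplication by $\e^{-\i\omega\Dt_2}$ in frequency is a translation by $\Dt_2$ in position, so $\widehat{\xi_j(\Dt_2,t_1)}(x)=\hat\xi_j(t_1,x-\Dt_2)$ is supported on $[\Dt_2,t_1+\Dt_2]$. An analogous computation---carrying out the $\omega$-integral, which collapses to a Dirac delta precisely because the form factor is flat---shows that $\widehat{\tilde\eta^{(1)}_j(\Dt_2)}(\cdot)$ is supported on $[0,\Dt_2]$. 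These two supports meet only at the single point $\{\Dt_2\}$, which has zero measure, so by Plancherel's identity the frequency-space inner product equals, up to a constant, the position-space overlap integral and therefore vanishes. Substituting back into the displayed identity yields Eq.~\eqref{bb}.
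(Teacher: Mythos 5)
Your proof is correct, but it follows a genuinely different route from the paper's. The paper never expands the two-boson norm at all: it applies the unitary $U_{\Dt_2}$ to the one-excitation state $|\tilde{x}_1\>\otimes b^\dagger_j(\xi_j(t_1))|\vac\>$, writes the image (via Prop.~\ref{PRO-A} and the single-excitation solution) as a sum of two terms that are orthogonal because their \emph{system} components lie in $\hilb_\e$ and $\hilb_\g$ respectively, and then extracts Eq.~\eqref{bb} from norm conservation together with the normalization identity $\|\A(\Dt_2)|\tilde{x}_1\>\|^2+\sum_k\|\tilde{\eta}^{(1)}_k(\Dt_2)\|^2=\||\tilde{x}_1\>\|^2$; the vanishing of the cross contraction is never invoked there. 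You instead compute the norm directly by Wick/CCR contractions, reduce the claim to the vanishing of $\<\xi_j(\Dt_2,t_1)|\tilde{\eta}^{(1)}_j(\Dt_2)\>$, and establish that vanishing by disjointness of supports in the position representation (Eq.~\eqref{eq:propa} plus the translation by $\Dt_2$). That orthogonality statement is precisely the content of the paper's \emph{next} lemma, Eq.~\eqref{A15}, which the paper proves by the Fourier-dual argument: a Dirac delta $\delta(s-[t_1+s'])$ forcing the time integrals onto an empty domain; so your single argument effectively subsumes both of the paper's lemmas. The trade-off: your route makes the physical mechanism explicit (photon wavepackets emitted in disjoint time windows occupy disjoint spatial regions, so their contraction dies), and it is self-contained at the level of the boson field; the paper's route avoids any Wick computation and, more importantly, is the same unitarity-plus-normalization trick that powers the induction step of Prop.~\ref{PRO-bbb}, so it scales to the $n$-measurement case with no extra work, whereas a direct Wick expansion for $n$ bosons would require controlling many cross contractions at once.
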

	\begin{proof}
		One has	
		\begin{equation}
			\left\| U_{\Dt_2} |\tilde{x}_1\> \otimes  b^\dagger_j(\xi_j(t_1))|\vac\> \right\| = \left\| |\tilde{x}_1\> \otimes  b^\dagger_j(\xi_j(t_1))|\vac\>  \right\| ,
		\end{equation}
		and hence	
		\begin{eqnarray}
			\left\| |\tilde{x}_1\> \otimes  b^\dagger_j(\xi_j(t_1))|\vac\>  \right\|^2 & = & \Big\|   \A(\Dt_2) |\tilde{x}_1\> \otimes b^\dagger_j(\xi_{j}(\Dt_2,t_1))|\vac\> \nonumber \\
			&& +|0\> \otimes b^\dagger_j(\xi_{j}(\Dt_2,t_1)) \, \sum_{k=1}^d b^\dagger_k(\tilde{\eta}^{(1)}_{k}(\Dt_2))|\vac\> \Big\|^2  \\ &=& \Big\|   \A(\Dt_2) |\tilde{x}_1\> \otimes b^\dagger_j(\xi_{j}(\Dt_2,t_1))|\vac\> \Big\|^2  \nonumber \\ &+& \left\| b^\dagger_j(\xi_{j}(\Dt_2,t_1)) \, \sum_{k=1}^d b^\dagger_k(\tilde{\eta}^{(1)}_{k}(\Dt_2))|\vac\> \right\|^2 .
		\end{eqnarray}
		Finally, using the normalization condition	
		\begin{equation}
			\| \A(\Dt_2) |\tilde{x}_1\> \|^2 + \sum_{k=1}^d \| \tilde{\eta}^{(1)}_{k}(\Dt_2)) \|^2 = \| |\tilde{x}_1\> \|^2  ,
		\end{equation}
		one proves Eq.~\eqref{bb}.
	\end{proof}
	Hence,	
	\begin{equation}
		U_{\Dt_2} \Psi^{x_1}_{t_1}  = U_{\Dt_2} |x_1\> \otimes \Big(  \alpha_1(t_1) |\vac\>  + \mu_1^*   \sum_{j=1}^d  b_j^\dagger(\xi_j(t_1))|\vac\> \Big)   ,
	\end{equation}
	with	
	\begin{equation}
		\alpha_1(t_1) = \alpha \mu_1^*  + \< \tilde{x}_1|\A(t_1)|\psi_\e\> ,
	\end{equation}
	can be represented as follows:	
	\begin{eqnarray}
		U_{\Dt_2} \Psi^{x_1}_{t_1}  &=& \mu_1 |0\> \otimes \Big( \alpha_1(t_1)  |\vac\> + \mu_1^*  \sum_{j=1}^d  b_j^\dagger(\xi_j(\Dt_2,t_1))|\vac\>  \Big)   \nonumber \\
		&+&  \alpha_1(t_1) \Big( \A(\Dt_2) |\tilde{x}_1\> \otimes  |\vac\> + |0\> \otimes \sum_{k=1}^d b_k^\dagger(\tilde{\eta}^{(1)}_k(\Dt_2)) |\vac\> \Big) \nonumber\\
		&+& \mu_1^* \Big( \A(\Dt_2) |\tilde{x}_1\> \otimes b^\dagger_j(\xi_{j}(\Dt_2,t_1))|\vac\>\!+\!|0\>\!\otimes\! b^\dagger_j(\xi_{j}(\Dt_2,t_1)) \, \sum_{k=1}^d b^\dagger_k(\tilde{\eta}^{(1)}_{k}(\Dt_2))|\vac\> \!\Big). \nonumber\\
	\end{eqnarray}	
	
	\begin{lemma}
		In the limit of flat form factors, then		
		\begin{equation}\label{A15}
			\< \vac |  b_j(\xi^*_{j}(\Dt_2,t_1)) b^\dagger_k(\tilde{\eta}^{(1)}_{k}(\Dt_2))|\vac\>  =0 ,
		\end{equation}
		for arbitrary $\Dt_2 \geq 0 $.
	\end{lemma}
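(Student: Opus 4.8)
The plan is to recognize that the left-hand side of Eq.~\eqref{A15} is, with the convention $b_j(\eta)=\int\mathrm{d}\omega\,\eta(\omega)b_j(\omega)$, simply the $L^2$ overlap of the two single-boson wavefunctions appearing in it. Using the canonical commutation relations $[b_i(\omega),b_j^\dagger(\omega')]=\delta_{ij}\delta(\omega-\omega')$ together with $b_i(\omega)|\vac\>=0$, one finds
\[
\< \vac |  b_j(\xi^*_{j}(\Dt_2,t_1)) \, b^\dagger_k(\tilde{\eta}^{(1)}_{k}(\Dt_2))|\vac\> = \delta_{jk}\int\mathrm{d}\omega\;\overline{\xi_j(\Dt_2,t_1,\omega)}\,\tilde{\eta}^{(1)}_{j}(\Dt_2,\omega).
\]
For $j\neq k$ the overlap vanishes because distinct modes are orthogonal, so it suffices to treat the diagonal term $j=k$ and show that the frequency integral is zero.

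First I would pass to the position representation via Plancherel's theorem, which turns the frequency integral into a constant times the position-space overlap $\int\mathrm{d}x\;\overline{\hat{\xi}_j(\Dt_2,t_1,x)}\,\hat{\tilde{\eta}}^{(1)}_{j}(\Dt_2,x)$. The key is then to determine the supports of the two antitransforms. Since $\tilde{\eta}^{(1)}_{j}(\Dt_2)$ has exactly the structural form of Eq.~\eqref{xi-A}---a decay amplitude built over the interval $[0,\Dt_2]$---the compact-support result of Eq.~\eqref{eq:propa} applies verbatim, so that $\hat{\tilde{\eta}}^{(1)}_{j}(\Dt_2,x)$ is supported on $[0,\Dt_2]$. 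For the other factor I would use that $\xi_j(\Dt_2,t_1,\omega)=\e^{-\i\omega\Dt_2}\xi_j(t_1,\omega)$: multiplication by $\e^{-\i\omega\Dt_2}$ in frequency corresponds to a rigid shift by $\Dt_2$ in position, so $\hat{\xi}_j(\Dt_2,t_1,x)=\hat{\xi}_j(t_1,x-\Dt_2)$, and since $\hat{\xi}_j(t_1,\cdot)$ is supported on $[0,t_1]$ by Eq.~\eqref{eq:propa}, the translated wavefunction is supported on $[\Dt_2,\Dt_2+t_1]$.

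The conclusion is then immediate: the two supports $[\Dt_2,\Dt_2+t_1]$ and $[0,\Dt_2]$ meet only at the single point $\{\Dt_2\}$, a set of measure zero, so the position-space overlap---and hence the original frequency integral---vanishes. This is precisely the physical content noted after Eq.~\eqref{eq:propa}: the photon emitted during the first interval $[0,t_1]$ has, after propagating for a further time $\Dt_2$ at unit speed, moved entirely to the right of the photon freshly emitted during the second interval $[0,\Dt_2]$, so the two single-boson wavefunctions occupy essentially disjoint regions of space and are therefore orthogonal.

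I expect the only real subtlety to be the bookkeeping of the supports---in particular correctly tracking the translation induced by the phase $\e^{-\i\omega\Dt_2}$ and confirming that the compact-support argument behind Eq.~\eqref{eq:propa} carries over unchanged to $\tilde{\eta}^{(1)}_{j}$. Everything else reduces to a direct application of the commutation relations and Plancherel's identity.
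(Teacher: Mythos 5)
Your proof is correct, and it takes a route that is recognizably different in organization from the one in the paper, though the two are Fourier-dual versions of the same mechanism. The paper proceeds by brute force: it substitutes the definitions of $\xi_j(\Dt_2,t_1)$ and $\tilde{\eta}^{(1)}_k(\Dt_2)$, uses the flatness of the form factors to pull $f_j^* f_k$ out of the frequency integral, and performs that integral first, obtaining $\int\mathrm{d}\omega\,\e^{\i\omega(t_1-s+s')}=2\pi\delta(s-[t_1+s'])$; since $s'\geq 0$ forces $s=t_1+s'$ to lie outside (up to a measure-zero boundary point) the integration domain $[0,t_1]$, the double time integral vanishes. You instead first reduce to the $L^2$ overlap via the CCR (picking up the $\delta_{jk}$, which the paper leaves implicit), then pass to position space with Plancherel and argue that the two wavefunctions have essentially disjoint supports: $[0,\Dt_2]$ for the freshly emitted photon and $[\Dt_2,\Dt_2+t_1]$ for the rigidly translated old one. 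Your support bookkeeping is accurate: the phase $\e^{-\i\omega\Dt_2}$ indeed translates $\hat{\xi}_j(t_1,\cdot)$ by $+\Dt_2$, and the compact-support statement of Eq.~\eqref{eq:propa} does carry over verbatim to $\tilde{\eta}^{(1)}_j$ because in the flat limit $f_j$ is constant, which is the only property of the form factor that enters. What your approach buys is conceptual economy and reuse: it leans on the already-established position-space picture (emission at unit speed, support on $[0,t]$) and makes the physical reason for orthogonality---the old photon has propagated strictly to the right of the new one---completely transparent, whereas the paper's computation is self-contained but opaque; the price is that you must invoke Plancherel and handle the mode index $j\neq k$ explicitly, steps the paper's single delta-function manipulation sidesteps.
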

	\begin{proof}
		One has	
		\begin{eqnarray}
			&&  \< \vac |  b_j(\xi^*_{j}(\Dt_2,t_1)) b^\dagger_k \tilde{\eta}^{(1)}_{k}(\Dt_2)) |\vac\> \\
			&& =  f^*_j f_k \int \mathrm{d}\omega \int_0^{t_1} \mathrm{d}s \int_0^{\Dt_2} \mathrm{d}s' \e^{\i\omega \Dt_2}  \e^{\i\omega(t_1-s)} \e^{-\i\omega(\Dt_2 -s')} \< \psi_e|\A^\dagger(s)|\e_j\>\!\<\e_k|\A(s')|\tilde{x}_1\> , \nonumber
		\end{eqnarray}
		and using	
		$$   \int \mathrm{d}\omega \;\e^{\i\omega(t_1-s + s')}  = 2\pi \delta(s-[t_1+s']) , $$
		which implies that $s=t_1+s'$, whence $s$ is outside the domain of integration $[0,t_1]$ thus implying Eq.~\eqref{A15}.
	\end{proof}
	
	Finally, using the above lemma and the property~\eqref{bb}, simple algebra leads to	
	\begin{equation}
		{\rm Tr}_B U_{\Dt_2} |\Psi^{x_1}_{t_1}\> \< \Psi^{x_1}_{t_1}| U^\dagger_{\Dt_2} = \Lambda_{\Dt_2}(|x_1\>\!\<x_1|) \mathbb{P}_1(x_1,t_1) ,
	\end{equation}
	where	
	\begin{equation}
		\mathbb{P}_1(x_1,t_1) = |  \alpha \mu_1^*  + \< \tilde{x}_1|\A(t_1)|\psi_\e\> |^2  +  |\mu_1|^2   \sum_{j=1}^d  \|\xi_j(t_1) \|^2   .
	\end{equation}
	It should be clear from the above analysis that this scheme may be immediately generalized for $n$ measurement scenario provided that Prop.~\ref{PRO-bbb} holds.
	
	\section{Proof of Prop.~\ref{PRO-bbb}}	
	We shall prove it by induction: assuming that Eq.~\eqref{bbb} holds for $(n-1)$ measurements, we will show that it holds for $n$ measurements as well. One finds	
	\begin{eqnarray}
		&& U_{\Dt_{n+1}} |\tilde{x}_{n+1}\> \otimes  b^\dagger_{j_1}(\xi_{j_1}(t_n-t_1,\Dt_1)) \, \prod_{k=2}^n
		\sum_{j_k=1}^d b^\dagger_{j_k}(\tilde{\eta}_{j_k}^{(k-1)}(t_n-t_k,\Dt_k)) |\vac\>
		\nonumber \\  && =
		\A(\Dt_{n+1}) |\tilde{x}_{n+1}\> \otimes  b^\dagger_{j_1}(\xi_{j_1}(t_{n+1}-t_1,\Dt_1)) \, \prod_{k=2}^n
		\sum_{j_k=1}^d b^\dagger_{j_k}(\tilde{\eta}_{j_k}^{(k-1)}(t_{n+1}-t_k,\Dt_k)) |\vac\>  \nonumber \\ && + |0\> \otimes b^\dagger_{j_1}(\xi_{j_1}(t_{n+1}-t_1,\Dt_1)) \, \prod_{k=2}^{n+1}
		\sum_{j_k=1}^d b^\dagger_{j_k}(\tilde{\eta}_{j_k}^{(k-1)}(t_{n+1}-t_k,\Dt_k)) |\vac\> .
	\end{eqnarray}
	Hence, due to the fact that $U_{\Dt_{n+1}}$ is unitary,	
	\begin{eqnarray}
		&& \|\tilde{x}_{n+1}\>\|^2 \left\| b^\dagger_{j_1}(\xi_{j_1}(t_n-t_1,\Dt_1)) \, \prod_{k=2}^n
		\sum_{j_k=1}^d b^\dagger_{j_k}(\tilde{\eta}_{j_k}^{(k-1)}(t_n-t_k,\Dt_k)) |\vac\> \right\|^2 \nonumber \\
		&& =\,  \left\|\A(\Dt_{n+1}) |\tilde{x}_{n+1}\> \right\|^2 \, \left\|  b^\dagger_{j_1}(\xi_{j_1}(t_{n+1}-t_1,\Dt_1)) \, \prod_{k=2}^n
		\sum_{j_k=1}^d b^\dagger_{j_k}(\tilde{\eta}_{j_k}^{(k-1)}(t_{n+1}-t_k,\Dt_k)) |\vac\> \right\|^2 \nonumber\\
		&& +\,  \left\| b^\dagger_{j_1}(\xi_{j_1}(t_{n+1}-t_1,\Dt_1)) \, \prod_{k=2}^{n+1}
		\sum_{j_k=1}^d b^\dagger_{j_k}(\tilde{\eta}_{j_k}^{(k-1)}(t_{n+1}-t_k,\Dt_k)) |\vac\> \right\|^2 .
	\end{eqnarray}
	Now, using Eq.~\eqref{bbb}, i.e.	
	\begin{eqnarray}
		&& \left\|  b^\dagger_{j_1}(\xi_{j_1}(t_{n+1}-t_1,\Dt_1)) \, \prod_{k=2}^n
		\sum_{j_k=1}^d b^\dagger_{j_k}(\tilde{\eta}_{j_k}^{(k-1)}(t_{n+1}-t_k,\Dt_k)) |\vac\> \right\|^2  \\
		&& = \left\|  b^\dagger_{j_1}(\xi_{j_1}(\Dt_1))\right\|^2  \, \prod_{k=2}^n \left\|
		\sum_{j_k=1}^d b^\dagger_{j_k}(\tilde{\eta}_{j_k}^{(k-1)}(t_{n+1}-t_k,\Dt_k)) \right\|^2  , \nonumber
	\end{eqnarray}
	one obtains	
	\begin{eqnarray}
		&& \left\|  b^\dagger_{j_1}(\xi_{j_1}(\Dt_1)) \, \prod_{k=2}^{n+1}
		\sum_{j_k=1}^d b^\dagger_{j_k}(\tilde{\eta}_{j_k}^{(k-1)}(t_{n+1}-t_k,\Dt_k))|\vac\> \right\|^2  \\
		&& = \left(\left\| |\tilde{x}_{n+1}\> \right\|^2 - \left\| \A(\Dt_{n+1}|\tilde{x}_{n+1}\> \right\|^2\right)\!\|  b^\dagger_{j_1}(\xi_{j_1}(\Dt_1))\|^2  \prod_{k=2}^{n} \left\|
		\sum_{j_k=1}^d b^\dagger_{j_k}(\tilde{\eta}_{j_k}^{(k-1)}(t_{n+1}-t_k,\Dt_k)) \right\|^2 .  \nonumber
	\end{eqnarray}
	Finally, using the normalization property	
	\begin{equation}
		\left\| \A(\Dt_{n+1})|\tilde{x}_{n+1}\> \right\|^2 + \sum_{j=1}^d \left\| \tilde{\eta}^{(n)}_j(\Dt_{n+1})\right\|^2 = \| |\tilde{x}_{n+1}\> \|^2 ,
	\end{equation}
	one proves	
	\begin{eqnarray}
		&&\left\|  b^\dagger_{j_1}(\xi_{j_1}(\Dt_1))  \, \prod_{k=2}^{n+1}
		\sum_{j_k=1}^d b^\dagger_{j_k}(\tilde{\eta}_{j_k}^{(k-1)}(t_{n+1}-t_k,\Dt_k))|\vac\> \right\|^2  \nonumber \\ && = \|  \xi_{j_1}(\Dt_1)\|^2  \, \prod_{k=2}^{n+1}
		\left\|
		\sum_{j_k=1}^d \tilde{\eta}_{j_k}^{(k-1)}(t_{n+1}-t_k,\Dt_k)\right\|^2  .
	\end{eqnarray}

\end{document}